%
%
%
%
%
%
%
%
%
\documentclass[10pt]{article}

\usepackage{amsthm,amssymb,stmaryrd}


\usepackage{dsfont}
\usepackage{epsfig}
\usepackage{mathrsfs}
\usepackage{amsmath}
\usepackage{cases}

\newcommand{\vv} { {\bf v}}

\newcommand{\vx} { {\bf x}}

\overfullrule=1mm
\overfullrule=1ex

\def\si{\sigma}

\def\ra{\rangle}

\newcommand{\bN} { {\mathbb{N}}}
\newcommand{\bC} { {\mathbb{C}}}
\newcommand{\bQ} { {\mathbb{Q}}}
\newcommand{\bZ} { {\mathbb{Z}}}

\newcommand{\bF} { {\mathbb{F}}}
\newcommand{\bK} { {\mathbb{K}}}
\newcommand{\bE} { {\mathbb{E}}}

\def\pa{{\partial}}

\def\CX{{\mathbb C}}
\def\QX{{\mathbb Q}}

\def\bN{{\mathbb N}}
\def\bZ{{\mathbb Z}}

\def\d{{\partial}}
\def\fraka{{\mathfrak a}}

\def\frakm{{\mathfrak m}}

\def\frakp{{\mathfrak p}}
\def\frakq{{\mathfrak q}}

\def\d{{\partial}}

\def\Gal{{\rm Gal}}

\def\min{{\rm min}}

\def\pa{{\partial}}

\def\bfx{\mathbf x}
\def\bfc{\mathbf c}

\newcommand{\Mat} {{\rm Mat}}

\newtheorem{thm}{Theorem}[section]

\newtheorem{lem}[thm]{Lemma}
\newtheorem{prop}[thm]{Proposition}
\newtheorem{definition}[thm]{Definition}

\newtheorem{remark}[thm]{Remark}
\newtheorem{exam}[thm]{Example}

\newtheorem{algorithm}[thm]{Algorithm}
\newtheorem{problem}[thm]{Problem}

\makeatletter \@addtoreset{equation}{section}
\begin{document}

\title{Separability Problems in Creative Telescoping
\thanks{S.\ Chen was partially supported by the NSFC
grants 11871067, 11688101, the Fund of the Youth Innovation Promotion Association, CAS,
and the National Key Research and Development Project 2020YFA0712300. R. Feng was partially supported by the NSFC
grants  11771433, 11688101, and Beijing Natural Science Foundation under Grant Z190004. P.\ Ma was partial supported by the NSFC
grants 11871067.  M.F. Singer was partially supported by by a grant from the Simons Foundation (No. 349357, Michael Singer)}}

\author{
Shaoshi Chen$^{a, b}$, Ruyong Feng$^{a, b}$, \\
\bigskip
Pingchuan Ma$^{a, b}$, and Michael F.\ Singer$^{c}$\\
$^a$KLMM,\, Academy of Mathematics and Systems Science, \\ Chinese Academy of Sciences, \\Beijing, 100190, China\\
$^b$School of Mathematical Sciences, \\University of Chinese Academy of Sciences,\\ Beijing 100049, (China)\\
$^c$Department of Mathematics, \\
North Carolina State University, \\
\medskip
Raleigh, NC 27695, (USA)\\
{\sf schen@amss.ac.cn,  ryfeng@amss.ac.cn}\\
{\sf mapingchuan15@mails.ucas.ac.cn, singer@math.ncsu.edu}
}


\maketitle

\begin{abstract}
For given multivariate functions specified by algebraic, differential or difference equations,
the separability problem is to decide whether they satisfy linear differential or difference equations
in one variable.  In this paper, we will explain how separability problems arise naturally in creative telescoping
and present some criteria for testing the separability for several classes of special functions,
including rational functions, hyperexponential functions, hypergeometric terms, and algebraic functions.
\end{abstract}


\section{Introduction}\label{SECT:intro}
The method of separation of variables has been used widely in solving differential equations~\cite{Miller1977}.
In order to solve the one-dimensional heat equation
\[\frac{\partial y}{\partial t} - c\frac{\partial^2 y}{\partial x^2} = 0, \, \, \text{where $c \in \bC$},\]
together with the boundary conditions $y(t, 0) = y(t, L) = 0$. One can try to find a nonzero solution of the form
\[y = u(t)v(x),\]
and then substitute this form into the equation to get
\[\frac{\frac{\partial u(t)}{\pa t}}{u} = c\frac{\frac{\partial^2 v(x)}{\pa x^2}}{v}.\]
Since both sides only depend on one variable, there exits some constant $\lambda \in \bC$ such that
\[\frac{\partial u}{\partial t} - \lambda u = 0 \quad \text{and} \quad c \frac{\partial^2 v}{\partial x^2} - \lambda v = 0. \]
Note that the above two equations are also satisfied by $y = u(t) v(x)$, which are linear differential equation in only one variable. After solving these special equations with the boundary conditions into account, a special solution of the heat equation can be given as
\begin{equation}\label{EQ:heat}
y(t, x) = \sum_{n=1}^\infty d_n \sin\left(\frac{n\pi x}{L}\right)\exp \left(-\frac{n^2\pi^2 c t}{L^2}\right),
\end{equation}
where $d_n\in \bC$ are coefficients determined by the initial conditions.
Motivated by this example, one would ask the following natural question.

\begin{problem}[Separability Problem]\label{PROB:sep}
Given a multivariate function specified by certain equations (e.g.\ algebraic, differential or difference equations),
decide whether this function satisfies linear differential or difference equations
in one variable.
\end{problem}
To make the problem more tractable, we will consider some special classes of functions, such as rational functions, algebraic functions, hyperexponential functions and hypergeometric terms etc..
The main goal of this paper is to show the close connection between the separability problem
and Zeilberger's method of creative telescoping~\cite{Wilf1992, Zeilberger1991}.

The remainder of this paper is organized as follows. We specify the separability problem and the existence problem of telescopers
precisely in Section~\ref{SECT:preli} together with the definition of orders and (local) dispersions of rational functions. After this,
we explain how the separability problems arise naturally in creative telescoping for rational functions in Section~\ref{SECT:rational},
hyperexponential functions and hypergeometric terms in Section~\ref{SECT:hyper}, and for algebraic functions in Section~\ref{SECT:alg}.
Separability criteria will be given for these classes of special functions. We then conclude our paper with some comments on
the separability  problem on D-finite functions and P-recursive sequences.



\section{Preliminaries}\label{SECT:preli}
Let~$\bF$ be a field of characteristic zero and let ${\bE = \bF(t, \vx)}$
be the field of rational functions in~$t$ and $\vx= (x_1, \ldots, x_m)$ over~$\bF$.
Let $\delta_t, \delta_{x_i}$ be the usual partial derivations
$\pa/\pa_t, \pa/\pa_{x_i}$ with $x_i \in \{x_1, \ldots, x_m\}$, respectively. The shift operators~$\si_t$ and $\si_{x_i}$ on~$\bE$ are defined
as the $\bF$-automorphisms such that for any $f\in \bE$, $\si_t(f(t, \vx)) = f(t+1, \vx)$ and
\[\si_{x_i}(f(t, \vx)) = f(t, x_1, \ldots, x_{i-1}, x_i+1, x_{i+1}, \ldots, x_m).\]
The ring of linear functional operators in $t$ and $\vx$ over $\bE$ is denoted
by $\bE\langle \partial_{t}, \partial_{\vx} \rangle $, where $\partial_{\vx} = (\pa_{x_1}, \ldots, \pa_{x_m})$ and $\partial_v$ with $v\in \{t, \vx\}$ is either the derivation $D_v$ such that $D_v f = f D_v + \delta_v(f)$
or the shift operator $S_v$ such that $S_v f = \si_v(f) S_v$ for any $f\in \bE$, and $\partial_t$ and $\partial_{x_i}$ commute.
For $v\in \{t, \vx\}$, we let $\Delta_v$ denote the difference operator $S_v-{\bf 1}$, where ${\bf 1}$ stands for the identity map on $\bE$.
Abusing notation, we let $\delta_v$ and $\si_v$ denote arbitrary extensions of $\delta_v$ and $\si_v$ to derivation and $\overline\bF$-automorphism
of $\overline{\bE}$, the algebraic closure of $\bE$. The functions we consider will be in certain differential or difference extension of $\bE$, which is also
an $\bE\langle \partial_{t}, \partial_{\vx} \rangle $-module via the action defined by simply interpreting  $D_v, S_v$ by  $\delta_v, \si_v$, respectively,  for $v\in \{t, \vx\}$.
The ring $\bF(t)\langle \partial_t\rangle$ is a subring of $\bE\langle \partial_{t}, \partial_{\vx}\ra$ that is also
a left Euclidean domain. Efficient algorithms for
basic operations in $\bF(t)\langle \partial_t\rangle$, such as computing
the least common left multiple (LCLM) of operators, have
been developed in~\cite{BronsteinPetkovsek1996, AbramovLeLi2005}.

\begin{definition}[Separable functions]\label{DEF:sep}
Let $\mathfrak{M}$ be an $\bE\langle \partial_{t}, \pa_{\vx} \rangle $-module and $f\in \mathfrak{M}$. We say that $f(t, \vx)$
is \emph{$\partial_t$-separable} if there exists a nonzero  $L\in \bF(t)\langle \partial_t\rangle$ such that $L(f)=0$.
\end{definition}
As an example, the special solution~\eqref{EQ:heat} of the one-dimensional heat equation is both $D_t$-separable and $D_x$-separable.
Note that $\partial_t$-separable functions are just the D-finite functions in the differential case and
the P-recursive sequences in the shift case, which are both introduced in~\cite{Stanley1980}. By the closure properties of D-finite functions and
P-recursive sequences, we have the same closure properties for  $\partial_t$-separable functions.
\begin{prop}\label{PROP:closure}
Let $\mathfrak{M}$ be an $\bE\langle \partial_{t}, \pa_{\vx} \rangle $-module. If  $f, g\in \mathfrak{M}$ are $\partial_t$-separable, so are
$f + g, f\cdot g$, and $a\cdot f$ for all $a\in \bF(t)$.
\end{prop}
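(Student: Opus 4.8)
The plan is to reduce everything to a single linear‑algebra characterization: $f\in\mathfrak{M}$ is $\partial_t$-separable if and only if the $\bF(t)$-subspace
$V_f:=\spanning_{\bF(t)}\{\partial_t^k(f):k\in\bN\}$ of $\mathfrak{M}$ is finite‑dimensional. Indeed, if $\dim_{\bF(t)}V_f=r<\infty$ then $f,\partial_t(f),\dots,\partial_t^r(f)$ are $\bF(t)$-linearly dependent, and gathering the coefficients of such a dependence into an element of $\bF(t)\langle\partial_t\rangle$ gives a nonzero $L$ with $L(f)=0$. Conversely, a nonzero annihilator $L=\sum_{i=0}^n a_i\partial_t^i$ with $a_n\neq 0$ yields, after dividing by $a_n\in\bF(t)$, an expression for $\partial_t^n(f)$ as an $\bF(t)$-combination of $f,\dots,\partial_t^{n-1}(f)$; applying $\partial_t$ repeatedly and using that $\bF(t)$ is stable under $\delta_t$ and under $\si_t$ (here $\delta_t(\bF(t))\subseteq\bF(t)$, $\si_t(\bF(t))=\bF(t)$), an easy induction shows $\partial_t^k(f)\in\spanning_{\bF(t)}\{f,\dots,\partial_t^{n-1}(f)\}$ for all $k$, so $V_f$ is finite‑dimensional. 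Thus it suffices to show $V_{f+g}$, $V_{fg}$ and $V_{af}$ are finite‑dimensional whenever $V_f$ and $V_g$ are.

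For the sum, $\partial_t^k(f+g)=\partial_t^k(f)+\partial_t^k(g)$, so $V_{f+g}\subseteq V_f+V_g$, which is finite‑dimensional. For multiplication by $a\in\bF(t)$: in the differential case ($\partial_t=D_t$) the Leibniz rule gives $D_t^k(af)=\sum_{j=0}^k\binom{k}{j}\delta_t^j(a)\,\delta_t^{k-j}(f)$ with every $\delta_t^j(a)\in\bF(t)$, and in the shift case ($\partial_t=S_t$) one has $S_t^k(af)=\si_t^k(a)\,\si_t^k(f)$ with $\si_t^k(a)\in\bF(t)$; since $V_f$ is an $\bF(t)$-vector space, in both cases $\partial_t^k(af)\in V_f$, so $V_{af}\subseteq V_f$.

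For the product (where we use that $\mathfrak{M}$ carries a multiplication, e.g.\ is a differential or difference ring extension of $\bE$, so that $f\cdot g$ is meaningful), we expand $\partial_t^k(fg)$: by Leibniz $D_t^k(fg)=\sum_{j=0}^k\binom{k}{j}\delta_t^j(f)\,\delta_t^{k-j}(g)$ in the differential case, and $S_t^k(fg)=\si_t^k(f)\,\si_t^k(g)$ in the shift case. In either case each summand is a product of an element of $V_f$ with an element of $V_g$, so $V_{fg}$ is contained in the $\bF(t)$-span of $\{b_ic_j\}$, where $\{b_i\}$, $\{c_j\}$ are $\bF(t)$-bases of $V_f$, $V_g$; this span has dimension at most $(\dim V_f)(\dim V_g)<\infty$.

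There is no genuinely hard step here — the content is bookkeeping with the Leibniz and shift rules together with the stability of $\bF(t)$ under $\delta_t$ and $\si_t$. The only points requiring a little care are checking that the equivalence between $\partial_t$-separability and finite‑dimensionality of $V_f$ holds simultaneously in the differential and difference settings, and observing that the statement about $f\cdot g$ presupposes a ring structure on $\mathfrak{M}$. Alternatively, one can simply invoke the classical closure properties of D‑finite functions and P‑recursive sequences from~\cite{Stanley1980}, to which this proposition reduces verbatim.
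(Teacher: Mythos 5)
Your argument is correct and is in substance the same as the paper's: the paper disposes of this proposition in one line by appealing to the classical closure properties of D-finite functions and P-recursive sequences from \cite{Stanley1980}, and your finite-dimensionality argument for $V_f=\spanning_{\bF(t)}\{\partial_t^k(f)\}$ is precisely the standard proof of those closure properties, handled uniformly in the differential and shift cases. Your parenthetical caveat that the product statement presupposes a ring structure on $\mathfrak{M}$ is a fair observation (the paper implicitly assumes $\mathfrak{M}$ is a differential or difference ring extension of $\bE$), but it does not affect correctness.
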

We will focus on the separability problem on function in an $\bE\langle \partial_{t}, \pa_{\vx} \rangle $-module.

\begin{definition}[Creative telescoping]\label{DEF:telescoper}
Let $\mathfrak{M}$ be an $\bE\langle \partial_{t}, \pa_{\vx} \rangle $-module and $f\in \mathfrak{M}$. A nonzero operator $L\in \bF(t)\langle \partial_t\rangle$
is called a \emph{telescoper} of type $(\partial_t, \partial_{\vx})$ for $f$ if there exist $Q_1, \ldots, Q_m\in \bE\langle \partial_{t}, \pa_{\vx} \rangle $
such that
\begin{equation}\label{EQ:telescoper}
L(t, \pa_t)(f) = \partial_{x_1}(Q_1(f)) + \cdots + \partial_{x_m}(Q_m(f)),
\end{equation}
where $\partial_t \in \{D_t, S_t\}$ and $\partial_{x_i} \in \{D_{x_i}, \Delta_{x_i}\}$.
\end{definition}

The central problem in the Wilf-Zeilberger theory of automatic proving of special-function identities is
related to the existence and the computation of telescopers for special functions. In the next sections, we will show that this central problem on creative telescoping is
closely connected to the separability problem
on the corresponding class of special functions.


Let $V = (V_1, \ldots, V_s)$ be any set partition of the variables $\vv= \{t, x_1, \ldots, x_m\}$.
A rational function $f\in \bF(t, \vx)$ is said to be \emph{split} with respect to the partition $V$
if $f = f_1\cdots f_s$ with $f_i\in \bF(V_i)$ and be \emph{semi-split} with respect to $V$ if there are split functions $g_j\in  \bF(t, \vx)$ such that
$f =  g_1 + \cdots + g_n$. By definition, we have $f = p/q$ with $p, q \in \bF[t, \vx]$ and $\gcd(p, q)=1$ is semi-split with respect to the partition $V$
if and only if the denominator $q$ is a split polynomial with respect to the partition $V$.
Split rational functions will be used to describe the separability of given functions.

Let $\bK = \bF(\vx)$ and $p\in \bK[t]$ be an irreducible polynomial in $t$. For any $f\in \bK(t)$, we can write $f = p^m a/b$, where $m \in \bZ, a, b \in \bK[t]$ with $\gcd(a, b) = 1$ and $p\nmid ab$.
Conventionally, we set $\nu_p(0)= +\infty$.
The integer $m$ is called the \emph{order} of $f$ at $p$, denoted by $\nu_p(f)$.
We collect some basic properties of valuations as follows and refer to~\cite[Chapter 4]{BronsteinBook} for their proofs.

\begin{prop}\label{PROP:val}
 Let $f, g \in \bK(t)$ and $p \in \bK[t]$ be an irreducible polynomial. Then,
 \begin{itemize}
 \item[$(i)$] $\nu_p(fg) = \nu_p(f) + \nu_p(g)$.
 \item[$(ii)$] $\nu_p(f+g) \geq \min\{\nu_p(f), \nu_p(g)\}$ and equality holds if $\nu_p(f) \neq \nu_p(g)$.
 \item[$(iii)$] If $\nu_p(f) \neq 0$, then $\nu_p(D_t(f)) = \nu_p(f)-1$. In particular, for any $i\in \bN$, $\nu_p(D_t^i(f)) = \nu_p(f)-i$ if $\nu_p(f)<0$.
 \end{itemize}
\end{prop}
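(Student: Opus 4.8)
The plan is to reduce everything to unique factorization in the polynomial ring $\bK[t]$: since $\bK=\bF(\vx)$ is a field, $\bK[t]$ is a principal ideal domain, so writing $f=p^m a/b$ with $m\in\bZ$, $a,b\in\bK[t]$, $\gcd(a,b)=1$, and $p\nmid ab$ determines $m=\nu_p(f)$ unambiguously. The cases where $f$ or $g$ vanishes are covered by the convention $\nu_p(0)=+\infty$, so I assume $f,g\neq 0$ throughout.

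For part $(i)$, write $f=p^m a/b$ and $g=p^n c/d$ with $p\nmid abcd$ and $\gcd(a,b)=\gcd(c,d)=1$. Then $fg=p^{m+n}(ac)/(bd)$; since $p$ is prime in $\bK[t]$ and divides none of $a,b,c,d$, it divides neither $ac$ nor $bd$, and this remains true after cancelling $\gcd(ac,bd)$. Hence $\nu_p(fg)=m+n$. For part $(ii)$, assume without loss of generality $m=\nu_p(f)\le n=\nu_p(g)$ and compute
\[
f+g = p^m\,\frac{ad + p^{\,n-m}cb}{bd},
\]
whose denominator $bd$ is prime to $p$; this already gives $\nu_p(f+g)\ge m=\min\{\nu_p(f),\nu_p(g)\}$. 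If moreover $m<n$, reducing the numerator modulo $p$ leaves $ad$, which is prime to $p$ since $p\nmid a$ and $p\nmid d$, so no factor of $p$ cancels and $\nu_p(f+g)=m$.

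For part $(iii)$, write $f=p^m a/b$ with $p\nmid ab$, $\gcd(a,b)=1$, and $m=\nu_p(f)\neq 0$. The product and quotient rules give
\[
D_t(f) = p^{m-1}\,\frac{m\,p'\,a\,b + p\,(a'b - ab')}{b^2},
\]
with $b^2$ prime to $p$. Reducing the numerator modulo $p$ leaves $m\,p'\,a\,b$, and the one point that needs more than bookkeeping is that $p\nmid p'$: as $p$ is irreducible it is nonconstant, and $\bK$ has characteristic zero, so $p'\neq 0$ with $\deg_t(p')=\deg_t(p)-1$, which rules out $p\mid p'$. Together with $p\nmid a$, $p\nmid b$, and $m\neq 0$ in $\bK$ (again characteristic zero), this shows $p\nmid m\,p'\,a\,b$, so the numerator is prime to $p$ and $\nu_p(D_t f)=m-1$. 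The last assertion follows by induction on $i$: if $\nu_p(f)<0$ then $\nu_p(D_t f)=\nu_p(f)-1<0$, so the hypothesis of $(iii)$ persists and $\nu_p(D_t^i f)=\nu_p(f)-i$ for every $i\in\bN$. The whole argument is elementary; the only place requiring care — and the natural candidate for the main obstacle — is ensuring $p\nmid p'$ in $(iii)$, which is precisely where the characteristic-zero hypothesis is used.
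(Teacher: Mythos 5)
Your proof is correct. Note that the paper does not actually prove this proposition — it only cites \cite[Chapter 4]{BronsteinBook} for the proofs — so there is no in-paper argument to compare against; your elementary computation via unique factorization in $\bK[t]$ is the standard one found in that reference, and you correctly isolate the one nontrivial point, namely that $p\nmid \delta_t(p)$ (and $m\neq 0$ in $\bK$) in part $(iii)$, which is where characteristic zero enters.
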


The dispersion introduced by Abramov in~\cite{Abramov1971} can be viewed as a shift analogue of the order.
For any polynomial $u\in \bK[t]$ with $\deg_t(u)\geq 1$, the \emph{dispersion} of $u$, denoted by $\text{dis}(u)$, is defined as
$\max\{k\in \bN \mid \gcd(u, \si_t^k(u)) \neq 1\}$, which is  the maximal integer root-distance $|\alpha - \beta|$ with $\alpha, \beta$ being roots of $u$ in $\bar K$.
Define $\text{dis}(u)=0$ if $u \in K\setminus \{0\}$ and $\text{dis}(0)=+\infty$.  For a rational function $f = a/b\in \bK(t)$ with $a, b\in \bK[t]$ and $\gcd(a, b)=1$,
define $\text{dis}(f) = \text{dis}(b)$. For later use, we introduce a local version of Abramov's dispersion.
Let $p\in \bK[t]$ be an irreducible polynomial. If $\si_t^i(p) \mid u$ for some $i\in \bZ$, the \emph{local dispersion} of $u$ at $p$, denoted by $\text{dis}_p(u)$,  is defined as
the maximal integer distance $|i-j|$ with $i, j\in \bZ$ satisfying $\si_t^i(p)\mid u$ and $\si_t^j(p)\mid u$; otherwise we define $\text{dis}_p(u)=0$.
Conventionally, we set $\text{dis}_p(0) = +\infty$. For a rational function $f = a/b\in \bK(t)$ with $a, b\in \bK[t]$ and $\gcd(a, b)=1$, we also define $\text{dis}_p(f) = \text{dis}_p(b)$.
By definition,  we have
\[\text{dis}(u) = \max \{\text{dis}_p(u)\mid \text{$p$ is an irreducible factor of $u$}\}.\]
The set $\{\si_t^i(p)\mid i\in \bZ\}$ is called the $\si_t$-orbit at $p$, denoted by $[p]_{\si_t}$.
Note that $\text{dis}_p(u) = \text{dis}_q(u)$ if $q \in [p]_{\si_t}$. So we can define the local dispersion of a rational function $f$
at a  $\si_t$-orbit at $p$, denoted by $\text{dis}_{[p]_{\si_t}}(f)$.
\begin{exam}
Let $u = x(x+1)(x-5)(x^2+1)(x^2+4x+5)\in \bQ[x]$. Then we have  $\text{dis}_x(u) = 6$ and $\text{dis}_{x^2+1}(u) = 2$.
Abramov's dispersion of $u$ is then equal to $6$.
\end{exam}

We now  shows how the local dispersions change under the action of linear recurrence operators,
which was first proved for Abramov's dispersions in~\cite{Abramov1971, Abramov1974} and~\cite[Section 3.1]{Paule1995}.

\begin{lem}\label{LEM:dis}
Let $f=a/b \in \bK(t)$ with $a, b\in \bK[t]$ and $\gcd(a, b)=1$ and let $p \in \bK[t]$ be an irreducible factor of $b$.
Let $L = \sum_{i=0}^\rho \ell_iS_t^i  \in \bK[t]\langle S_t\rangle$ be such that $\ell_{\rho}\ell_0 \neq 0$ and $\sigma^i_t(p)$ does not divide $\ell_\rho \ell_0$ for any $i\in \bZ$.
Then $\text{dis}_p(L(f)) = \text{dis}_p(f) + \rho$. In particular, $\text{dis}_p(\Delta_t(f)) = \text{dis}_p(f) + 1$.
\end{lem}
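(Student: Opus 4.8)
The plan is to localize the problem at the $\sigma_t$-orbit $[p]_{\sigma_t}$ and track the highest and lowest "floors" of that orbit appearing in the denominator of $L(f)$. Write $b = \sigma_t^{u}(p)^{e_u}\cdots\sigma_t^{v}(p)^{e_v}\cdot c$ where $u\le v$ are the extreme integers with $\sigma_t^i(p)\mid b$, the exponents $e_u,e_v\ge 1$, and $c\in\bK[t]$ is coprime to every $\sigma_t^i(p)$; by definition $\text{dis}_p(f)=v-u$. The key observation is that $S_t^k$ acts on the orbit by shifting the index: $S_t^k$ sends the factor $\sigma_t^i(p)$ to (a unit times) $\sigma_t^{i+k}(p)$. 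Hence $S_t^k(f)$ has, in its reduced denominator, the factor $\sigma_t^{u+k}(p)$ at the bottom and $\sigma_t^{v+k}(p)$ at the top of the orbit, with exactly the same multiplicities $e_u,e_v$, since multiplication by $S_t^k$ is an automorphism and does not change whether numerator and denominator are coprime.

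The main step is then a valuation bookkeeping argument at the two extreme primes $P_{\min}:=\sigma_t^{u}(p)$ and $P_{\max}:=\sigma_t^{v+\rho}(p)$. First I would show $\nu_{P_{\min}}(L(f))<0$: among the summands $\ell_i S_t^i(f)$ for $i=0,\dots,\rho$, only the term $i=0$ can contribute the prime $P_{\min}=\sigma_t^{u}(p)$ in its denominator (for $i\ge 1$, $S_t^i(f)$ has $\sigma_t^{u+i}(p)$ as its lowest orbit prime, which is a different irreducible since $i>0$), and $\nu_{P_{\min}}(\ell_0 S_t^0(f)) = \nu_{P_{\min}}(\ell_0)+\nu_{P_{\min}}(f) = -e_u<0$ because by hypothesis $\sigma_t^i(p)\nmid\ell_0$ for all $i\in\bZ$, in particular $P_{\min}\nmid\ell_0$. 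By Proposition~\ref{PROP:val}(ii) with distinct valuations, $\nu_{P_{\min}}(L(f))=-e_u<0$. Symmetrically, only the top term $i=\rho$ contributes $P_{\max}=\sigma_t^{v+\rho}(p)$, and since $\sigma_t^i(p)\nmid\ell_\rho$ for all $i$ we get $\nu_{P_{\max}}(L(f))=-e_v<0$. Therefore the reduced denominator of $L(f)$ is divisible by both $\sigma_t^{u}(p)$ and $\sigma_t^{v+\rho}(p)$, giving $\text{dis}_p(L(f))\ge (v+\rho)-u=\text{dis}_p(f)+\rho$. For the reverse inequality, note every $S_t^i(f)$ with $0\le i\le\rho$ has its orbit primes confined to exponents in $[u+i,\,v+i]\subseteq[u,\,v+\rho]$, and multiplying by $\ell_i\in\bK[t]$ cannot introduce new orbit primes (again by the hypothesis on $\ell_i$... more carefully, $\ell_i$ may be divisible by some $\sigma_t^j(p)$ for the intermediate $i$, but that only affects the numerator after cancellation, never enlarging the denominator's orbit support); summing and clearing denominators, the denominator of $L(f)$ divides a product supported on $\{\sigma_t^i(p): u\le i\le v+\rho\}$ times orbit-free factors, so $\text{dis}_p(L(f))\le \text{dis}_p(f)+\rho$.

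The special case $\text{dis}_p(\Delta_t(f))=\text{dis}_p(f)+1$ is immediate by taking $L=S_t-\mathbf{1}$, whose coefficients $\ell_1=1,\ell_0=-1$ are units and hence divisible by no $\sigma_t^i(p)$.

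I expect the main obstacle to be the clean justification that the intermediate coefficients $\ell_i$ ($0<i<\rho$) cannot spoil the two extreme contributions — one must argue that even if $\ell_i$ happens to be divisible by $\sigma_t^{u}(p)$ or $\sigma_t^{v+\rho}(p)$, the term $\ell_i S_t^i(f)$ still has strictly larger valuation at those primes than the extreme terms do, so Proposition~\ref{PROP:val}(ii)'s equality clause still applies; at $P_{\min}$ this holds because $\nu_{P_{\min}}(S_t^i(f))\ge 0$ for $i\ge 1$ while $\nu_{P_{\min}}(\ell_0 S_t^0(f))=-e_u<0$, and symmetrically at $P_{\max}$. Handling the shift-automorphism action on irreducibles (that distinct shifts of $p$ remain distinct irreducibles, which underlies the separation of the summands) is routine but should be stated explicitly.
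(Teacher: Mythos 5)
Your proof is correct and follows essentially the same route as the paper's: both arguments isolate the two extreme shifts $\sigma_t^{u}(p)$ and $\sigma_t^{v+\rho}(p)$, observe that exactly one summand of $L(f)$ carries each of them in its denominator (which is where the hypotheses on $\ell_0$ and $\ell_\rho$ enter), and bound the dispersion from above by the common denominator $b\,\sigma_t(b)\cdots\sigma_t^\rho(b)$. The only difference is presentational — you phrase the cancellation argument via the ultrametric inequality of Proposition~\ref{PROP:val}(ii) at the two extreme primes, while the paper clears denominators and checks directly that neither prime divides the combined numerator $\sum_i \ell_i\sigma_t^i(a)u_i$ — and the ``obstacle'' you flag about intermediate $\ell_i$ is already fully resolved by your own valuation estimate, since $\nu_{P_{\min}}(\ell_i S_t^i(f))\geq 0$ for $i\geq 1$ regardless of what divides $\ell_i$.
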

\begin{proof}
Let $d = \text{dis}_p(b)$.  Without loss of generality, we may assume that $p\mid b$ but $\si_t^i(p)\nmid b$ for any $i<0$.
Since $\gcd(a, b)=1$ and $\si_t$ is a $\bK$-automorphism of $\bK[t]$, we have $\gcd(\si_t^i(a), \si_t^i(b))=1$ for any $i \in \bZ$.
Applying $L$ to $f$ yields
\[L(f) =\sum_{i=0}^\rho  \ell_i \si_t^i\left(\frac{a}{b}\right) = \frac{\sum_{i=0}^\rho \ell_i \si_t^i(a)u_i}{u},\]
where $u=b\si_t(b)\cdots \si_t^\rho(b)$ and $u_i = u/\si_t^i(b)$. Write $L(f) = A/B$ with $A, B\in \bK[t]$ and $\gcd(A, B)=1$.
Then $B \mid u$ and $\text{dis}_p(L(f)) = \text{dis}_p(B)$ by definition.
Since $\si_t^i(p)\nmid \ell_0$ and $\si_t^i(p)\nmid \ell_{\rho}$ for any $i\in \bZ$, we have
both $p$ and $\si_t^{d + \rho}(p)$ do not divide the sum $\sum_{i=0}^\rho \ell_i \si_t^i(a)u_i$, but they divide $u$. So $p\mid B$ and
$\si_t^{d + \rho}(p)\mid B$, which implies that $\text{dis}_p(B)\geq d+\rho$. Since $B\mid u$, we have
 $\text{dis}_p(B)\leq \text{dis}_p(u)= d+\rho$. Therefore, $\text{dis}_p(L(f)) = d+\rho$.
\end{proof}

\section{The rational case}\label{SECT:rational}
We first explain how the existence problem of telescopers for rational functions is naturally
connected to the separability problem on this class of functions.
Let $f(t, x)$ be a bivariate rational function in $\bF(t, x)$. By the Ostrogradsky-Hermite reduction~\cite{Ostrogradsky1845, Hermite1872}, we can decompose $f$
into the form
\[f = D_x(g) + \frac{a}{b},\]
where $g\in \bF(t, x)$ and $a, b\in \bF(t)[x]$ with $\gcd(a, b)=1$, $\deg_x(a)<\deg_x(b)$ and $b$ being squarefree in $x$ over $\bF(t)$. Moreover,  $f = D_x(h)$ for some $h\in \bF(t, x)$ if and only if $a=0$.
Then $f$ has a telescoper of type $(S_t, D_x)$ if and only if $a/b$ does. Applying a nonzero operator $L = \sum_{i=0}^\rho \ell_i S_t^i\in \bF(t)\langle S_t\rangle$ to $a/b$ yields
\[L\left(\frac{a}{b}\right) = \sum_{i=0}^\rho \ell_i(t) \si_t^i \left(\frac{a}{b}\right) = \sum_{i=0}^\rho \frac{\ell_i(t)a(t+i, x)}{b(t+i, x)} = \frac{p}{q}, \]
 where $p, q \in \bF[t, x]$ with $\gcd(p, q)=1$. Since the shift operator $S_t$ is an $\bF(x)$-automorphism and preserves the degrees in $t$ and $x$,
 we have $b(t+i, x)$ is squarefree in $x$ over $\bF(t)$ for any $i\in \bN$ and $\deg_x(a(t+i, x)) < \deg_x(b(t+i, x))$. So $\deg_x(p)< \deg_x(q)$ and
 $q$ is also squarefree in $x$ over $\bF(t)$.   This implies the operator $L$ is a telescoper of type $(S_t, D_x)$ for $a/b$, i.e., $L(a/b) = D_x(g)$ for some $g\in \bF(t, x)$
 if and only if $p =0$, i.e., $L(a/b) = 0$. Therefore, we conclude that $f$ has a telescoper of type $(S_t, D_x)$ if and only $a/b$ is $S_t$-separable.

We can also consider telescopers of type $(D_t, S_x)$. By Abramov's reduction~\cite{Abramov1975, Abramov1995b}, we can decompose $f\in \bF(t, x)$ into the form
\[f = \Delta_x(g) + \frac{a}{b},\]
where $g\in \bF(t, x)$ and $a, b\in \bF(t)[x]$ with $\gcd(a, b)=1$, $\deg_x(a)<\deg_x(b)$ and $b$ being shift-free in $x$ over $\bF(t)$, i.e., $\gcd(b, \si_x^i(b))=1$ for all nonzero $i\in \bZ$.
Applying a nonzero operator $L = \sum_{i=0}^\rho \ell_i(t) D_t^i\in \bF(t)\langle D_t\rangle$ to $a/b$ yields
\[L\left(\frac{a}{b}\right) = \sum_{i=0}^\rho \ell_i \delta_t^i \left(\frac{a}{b}\right)  = \sum_{i=0}^\rho \frac{\ell_i(t)a_i}{b^{i+1}} = \frac{p}{q}, \]
where $a_i, p, q \in \bF[t, x]$ with $\deg_x(a_i) < (i+1)\deg_x(b)$ and $\gcd(p, q)=1$. Since $b$ is shift-free in $x$, so is $b^i$ for any $i\in \bN$.
Note that any factor of a shift-free polynomial is still shift-free. So $q$ is shift-free and $\deg_x(p)< \deg_x(q)$.
This implies the operator $L$ is a telescoper of type $(D_t, S_x)$ for $a/b$, i.e., $L(a/b) = \Delta_x(g)$ for some $g\in \bF(t, x)$
if and only if $p =0$, i.e., $L(a/b) = 0$. Then we also have  that $f$ has a telescoper of type $(D_t, S_x)$ if and only $a/b$ is $D_t$-separable.

The next theorem characterizes all possible separable rational functions in terms of semi-split rational functions.
\begin{thm}\label{THM:seprat}
A rational function $f \in \bF(t, \vx)$ is $\partial_t$-separable if and only if $f$
is semi-split in $t$ and $\vx$.
\end{thm}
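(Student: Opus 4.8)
The plan is to prove both directions, with the nontrivial direction being necessity. For sufficiency, suppose $f$ is semi-split in $t$ and $\vx$, say $f = \sum_{j=1}^n g_j h_j$ with $g_j \in \bF(t)$ and $h_j \in \bF(\vx)$. Each $h_j$ is annihilated by the identity-type operators in $t$, in particular $g_j h_j$ is $\partial_t$-separable (it is a scalar-in-$\bF(t)$ multiple of the constant-in-$t$ function $h_j$, and a single-variable function in $t$ times such a thing is trivially $\partial_t$-separable). Then Proposition~\ref{PROP:closure} gives that the finite sum $f$ is $\partial_t$-separable. This handles both the differential ($D_t$) and shift ($S_t$) cases uniformly.

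For necessity, the idea is to reduce to the bivariate situation treated earlier in the section and then use valuation/dispersion bookkeeping. Write $f = p/q$ with $p,q \in \bF[t,\vx]$, $\gcd(p,q)=1$. The key observation is that $f$ is semi-split with respect to the partition $\{t\}, \{x_1,\ldots,x_m\}$ if and only if $q$ splits as $q = q_1(t) q_2(\vx)$ (by the remark preceding the theorem relating semi-split functions to split denominators). So the goal becomes: if $f$ is $\partial_t$-separable, then the denominator $q$ factors into a $t$-part and an $\vx$-part. In the differential case, suppose $L = \sum_{i=0}^\rho \ell_i(t) D_t^i$ annihilates $f$ with $\ell_\rho \neq 0$. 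Working over $\bK = \bF(\vx)$ and viewing $f \in \bK(t)$, pick any irreducible factor $p_0 \in \bK[t]$ of $q$ with $\nu_{p_0}(f) < 0$; Proposition~\ref{PROP:val}(iii) says $\nu_{p_0}(D_t^i f) = \nu_{p_0}(f) - i$, so the terms $\ell_i D_t^i f$ have strictly decreasing orders at $p_0$ and by Proposition~\ref{PROP:val}(ii) their sum $L(f)$ has order exactly $\nu_{p_0}(f) - \rho < 0$, contradicting $L(f)=0$. Hence $q$ has no irreducible factor in $\bK[t]$ of negative order — meaning, after clearing, the part of $q$ that genuinely involves $t$ must be a unit in $\bK(t)$, i.e. lies in $\bF(\vx)$ up to the $t$-content; chasing this through gives $q = q_1(t)q_2(\vx)$. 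The shift case is entirely parallel, using Lemma~\ref{LEM:dis} in place of Proposition~\ref{PROP:val}(iii): if $L = \sum \ell_i S_t^i$ with $\ell_\rho \ell_0 \neq 0$ annihilates $f$ and $q$ has an irreducible factor $p_0$ with no $\sigma_t^i(p_0)$ dividing $\ell_\rho \ell_0$, then $\text{dis}_{p_0}(L(f)) = \text{dis}_{p_0}(f) + \rho$ is finite, contradicting $L(f) = 0$ (whose dispersion is $+\infty$); one must also arrange, by a preliminary normalization of $L$ (clearing any $\sigma_t^i(p_0)$ factors from $\ell_0, \ell_\rho$, which is harmless since we may multiply $L$ on the left by elements of $\bF(t)$), that the hypothesis of Lemma~\ref{LEM:dis} applies, or argue orbit-by-orbit.

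The main obstacle I expect is the bookkeeping in the necessity direction that upgrades ``$q$ has no bad irreducible factor in $\bK[t]$'' to the clean statement ``$q = q_1(t) q_2(\vx)$''. One has to be careful that $q \in \bF[t,\vx]$ versus $q \in \bK[t] = \bF(\vx)[t]$: an irreducible factor of $q$ in $\bF[t,\vx]$ that involves both $t$ and some $x_i$ remains a non-unit in $\bK[t]$ of positive degree in $t$, and the valuation/dispersion argument rules exactly these out, leaving only factors that are polynomials in $t$ alone or polynomials in $\vx$ alone (the latter becoming units in $\bK[t]$). Collecting the $t$-only factors into $q_1$ and the $\vx$-only factors into $q_2$, and checking that $\gcd(p,q)=1$ together with the absence of mixed factors forces the semi-split form of $f$ itself (not just of $q$), is the step that needs the most care; it is essentially a partial-fraction decomposition of $f$ over $\bF(t)(\vx)$ grouped by the orbits of denominators, combined with the earlier bivariate reductions applied variable-group-wise. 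I would also need to reduce the general $m$-variable case to the two-block partition by an induction or by treating the block $\vx$ as a single ``variable'' throughout, which the valuation arguments accommodate since they only used $\bK = \bF(\vx)$ as the coefficient field.
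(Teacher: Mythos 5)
Your overall strategy is the paper's: reduce semi-splitness of $f=p/q$ to splitness of the reduced denominator $q$, then kill a putative bad factor via the valuation identity in the $D_t$ case and Lemma~\ref{LEM:dis} in the $S_t$ case. But your necessity argument has a genuine gap at its central step. You apply the valuation computation to \emph{any} irreducible factor $p_0\in\bK[t]$ of $q$ with $\nu_{p_0}(f)<0$ and conclude that no such factor exists. That conclusion is false: $f=1/t$ is $D_t$-separable (it is annihilated by $tD_t+1$), yet its denominator has a factor of negative order. The error is in the step ``$\nu_{p_0}(D_t^if)=\nu_{p_0}(f)-i$, so the terms $\ell_iD_t^if$ have strictly decreasing orders at $p_0$'': this requires $\nu_{p_0}(\ell_i)=0$, which can fail exactly when $p_0$ is (an associate of) a polynomial in $\bF[t]$, since $\ell_i\in\bF(t)$. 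The whole point is that one must choose $p_0$ to be a \emph{non-split} irreducible factor (one genuinely involving both $t$ and some $x_j$); for such a $p_0$ the condition $\nu_{p_0}(\ell_i)=0$ is automatic precisely because the $\ell_i$ lie in $\bF(t)$, so the contradiction rules out exactly the mixed factors and nothing more. You gesture at this in your last paragraph (``the valuation/dispersion argument rules exactly these out''), but the reason it rules out exactly these, and not the pure-$t$ factors, is the one observation you never make.

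The same omission infects the shift case. For a non-split $p_0$, the hypothesis $\sigma_t^i(p_0)\nmid\ell_0\ell_\rho$ of Lemma~\ref{LEM:dis} holds for free because $\ell_0,\ell_\rho$ can be taken in $\bF[t]$ and are therefore free of $\vx$; the only normalizations needed are $\ell_i\in\bF[t]$ and $\ell_0\neq0$ (the latter by left-multiplying by a power of $S_t^{-1}$). Your proposed fix --- ``clearing any $\sigma_t^i(p_0)$ factors from $\ell_0,\ell_\rho$'' by multiplying $L$ on the left by elements of $\bF(t)$ --- does not work: such a multiplication rescales all coefficients by the same factor and cannot selectively remove factors from $\ell_0$ or $\ell_\rho$. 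By contrast, the step you flag as the hardest (upgrading ``$q$ split'' to ``$f$ semi-split'') is the easy one, recorded before the theorem as holding by definition: if $q=q_1(t)q_2(\vx)$, expanding the numerator $p$ monomial-by-monomial in $\vx$ exhibits $f$ as a finite sum of split functions, and conversely the reduced denominator of a sum of split functions divides a split polynomial and is hence split. No induction on $m$ or partial-fraction decomposition is needed; treating $\bK=\bF(\vx)$ as the coefficient field, as you do, already handles all $m$ at once.
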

\begin{proof}
Assume that $f$ is semi-split in $t$ and $\vx$. Then $f= a_1 b_1 + \cdots + a_n b_n$,
where $a_i \in \bF(t)$  and $b_i\in \bF(\vx)$ for all $i$ with $1\leq i \leq n$. Since each $a_ib_i$ is annihilated by the operator $L_i := \partial_t - \partial_t(a_i)/a_i \in \bF(t)\langle \pa_t\rangle$,
the rational function $f$ is annihilated by $\text{LCLM}(L_1, \ldots, L_n)$. So $f$ is $\partial_t$-separable.

For the necessity we assume that $f = a/b$ with $a, b \in \bF[t, \vx]$ and $\gcd(a, b)=1$
is $\partial_t$-separable, i.e., there exists a nonzero operator $L = \sum_{i=0}^\rho \ell_i \pa_t^i\in \bF(t)\langle \pa_t\rangle$ with $\ell_\rho \neq 0$
such that $L(f)=0$. It suffices to show that the denominator $b$ is split with respect to $t$ and $\vx$.
Suppose for the sake of contradiction that $b$ is not split.  Then $b$ has at least one irreducible factor $p$ such that
$p$ is not split. Now we proceed by a case distinction according to the type of $\partial_t$. In the case when $\pa_t = D_t$,
we have $\nu_p(\ell_iD_t^i(f))= \nu_p(f)-i$ for each $i$ with $\ell_i \neq 0$, since  $\nu_p(f) < 0$ and $\nu_p(\ell_i) =0$, which implies further that $\nu_p(L(f)) = \nu_p(f) - \rho$ by Proposition~\ref{PROP:val}.
But $\nu_p(L(f))= \nu_p(0) = +\infty$, which leads to an contradiction. In the case when $\pa_t = S_t$, we may always assume that $\ell_i \in \bF[t]$ and
$\ell_0 \neq 0$ since $\si_t$ is an $\bF(\vx)$-automorphism of $\bF(t, \vx)$.
Since $\ell_0$ and $\ell_\rho$ are free of $x$, we have $\si_t^i(p) \nmid \ell_0\ell_\rho$ for any $i\in \bZ$.
By Lemma~\ref{LEM:dis}, we get $\text{dis}_p(L(f)) = \text{dis}_p(f)+\rho < \infty$, which contradicts with $\text{dis}_p(L(f)) = \text{dis}_p(0) = +\infty$.
\end{proof}

\begin{remark}
With the above theorem, we can detect easily the $\partial_t$-separability of rational functions by the computation of contents and derivatives of multivariate polynomials in $t$.
\end{remark}

\section{The Hyperexponential and Hypergeometric \\Cases}\label{SECT:hyper}
The separability problem on hyperexponential functions and hypergeometric terms was first studied in~\cite{LeLi04}, which was later connected
to the existence of parallel telescopers for hyperexponential functions~\cite{Chen2014}.  We motivate this problem by revisiting Zeilberger's algorithm which
computes telescopers for hypergeometric terms (see~\cite[Chapter 6]{PWZbook1996}).

Let $H(t, x)$ be a nonzero hypergeometric term over the rational-function field $\bF(t, x)$, i.e., both $\si_t(H)/H$ and $\si_x(H)/H$ are in $\bF(t, x)$.
If telescopers of type $(S_t, S_x)$ exist for $H$, Zeilberger's algorithm starts from an ansatz: for fixed $\rho \in \bN$, set $L= \sum_{i=0}^\rho \ell_i S_t^i \in \bF(t)\langle S_t\rangle$ with the $\ell_i$'s
being undetermined coefficients. Applying $L$ to $H$ yields
\[T := L(H) = \sum_{i=0}^\rho \ell_i \si_t^i(H) = \sum_{i=0}^\rho \ell_i a_i H = \frac{\sum_{i=0}^\rho \ell_i P_i}{Q}H,\]
where $a_i = \si_t^i(H)/H = P_i/Q \in \bF(t, x)$ with $P_i, Q\in \bF[t, x]$. The second step of Zeilberger's algorithm is computing
the Gosper form of $L(H)$ that gives
\[\frac{\si_x(L(H))}{L(H)} = \frac{\si_x\left(\sum_{i=0}^\rho \ell_i P_i\right)}{\sum_{i=0}^\rho \ell_i P_i} \frac{\si_x(p)}{p} \frac{q}{r},\]
where $(p, q, r)\in \bF(t)[x]^3$ is a Gosper form of the rational function
\[ \frac{Q\si_x(H)}{(\si_x(Q)H)}\]
satisfying that $\gcd(q, \si_x^i(r))=1$ for all $i\in \bZ$.
The last step is finding $\ell_0, \ldots, \ell_\rho \in \bF(t)$, not all zero,  such that the equation
\[\left(\sum_{i=0}^\rho \ell_i P_i\right)p = q \si_x(z) - \si_x^{-1}(r) z.\]
 has a polynomial solution in $\bF(t)[x]$. If so,  then $L = \sum_{i=0}^\rho \ell_iS_t^i $ is a telescoper for $H$.
It may happen that the final choice of the $\ell_i$'s satisfies that $\sum_{i=0}^\rho \ell_i P_i=0$. This  means division by zero
may happen in the second step. To avoid this, we should first detect whether $L(H) = 0$ for some $L\in \bF(t)\langle S_t\rangle$, i.e.,
the separability problem on hypergeometric terms.

The following theorem characterizes all possible separable hyperexponential functions and hypergeometric terms, whose
proof was given in~\cite[Lemma 4]{LeLi04} or in~\cite[Proposition 10]{Chen2014}.
\begin{thm} Let $\mathfrak{M}$ be an $\bE\langle \partial_{t}, \pa_{\vx} \rangle $-module and let $H\in \mathfrak{M}$ be such that
\[\partial_t(H) = a H\,\, \text{and} \,\, \pa_{x_i}(H) = b_i H \,\, \text{with $a, b_i \in \bF(t, \vx)$}. \]
Then we have,
\begin{itemize}
\item[$(i)$] Hyperexponential case: $H$ is $D_t$-separable if and only if there exist $p\in \bF(\vx)[t]$ and $r\in \bF(t)$ such that
\[ a = \frac{\delta_t(p)}{p} + r.\]
\item[$(ii)$] Hypergeometric case: $H$ is $S_t$-separable if and only if there exist $p\in \bF(\vx)[t]$ and $r\in \bF(t)$ such that
\[ a = \frac{\si_t(p)}{p} \cdot r.\]
\end{itemize}
\end{thm}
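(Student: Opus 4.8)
The plan is to prove each direction of both equivalences by tracking how $\partial_t$ acts on the ``$t$-split part'' of the multiplier $a$, mirroring the rational case but working multiplicatively with the logarithmic-derivative (resp.\ shift-quotient) data. First I would handle the sufficiency direction, which is a direct construction: in the hyperexponential case, if $a = \delta_t(p)/p + r$ with $p\in\bF(\vx)[t]$ and $r\in\bF(t)$, then $g := H/p$ satisfies $\delta_t(g)/g = a - \delta_t(p)/p = r \in \bF(t)$, so $g$ is $D_t$-separable; since $r$ has a denominator lying in $\bF(t)$, the first-order operator $D_t - r$ kills $g$, and clearing the $\bF(t)$-denominator of $r$ gives a genuine operator in $\bF(t)\langle D_t\rangle$ annihilating $g$, hence $H = pg$ is annihilated by a suitable operator in $\bF(t)\langle D_t\rangle$ obtained by the usual change of variables $H \mapsto pg$ (equivalently, apply Proposition~\ref{PROP:closure} after noting $p\in\bF(\vx)[t]$ need not itself be $D_t$-separable, so instead one directly verifies $L := p\,D_t\,p^{-1} - $ (the operator for $r$) works, or simply conjugate). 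The hypergeometric case is parallel: if $a = \frac{\si_t(p)}{p}\cdot r$ then $g := H/p$ has $\si_t(g)/g = r\in\bF(t)$, and the first-order recurrence operator $S_t - r$, after clearing denominators, lies in $\bF(t)\langle S_t\rangle$ and kills $g$; conjugating back by $p$ gives an operator in $\bF(t)\langle S_t\rangle$ killing $H$.

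Next I would prove necessity, which is where the real content lies. Suppose $H$ is $\partial_t$-separable, so there is a nonzero $L=\sum_{i=0}^\rho \ell_i\partial_t^i \in \bF(t)\langle \partial_t\rangle$ with $L(H)=0$. Because $\partial_t^i(H) = c_i H$ for rational functions $c_i\in\bF(t,\vx)$ (obtained by iterating the relation $\partial_t(H)=aH$ in the differential case, or $\si_t^i(H) = (a\,\si_t(a)\cdots\si_t^{i-1}(a))H$ in the shift case), dividing $L(H)=0$ by $H$ yields a nontrivial \emph{rational-function} relation $\sum_{i=0}^\rho \ell_i c_i = 0$ in $\bF(t,\vx)$. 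The strategy is to extract from this relation the structural statement about $a$. In the differential case, set $g$ to be a suitable ``primitive'' and argue via valuations: let $p\in\bF(\vx)[t]$ be an irreducible polynomial (in $t$ over $\bK=\bF(\vx)$) with $\nu_p(a)<0$ but $p\notin\bF(t)$, i.e.\ $p$ genuinely involves some $x_j$; I would show such a $p$ cannot occur, because then $\nu_p(c_i)$ would decrease strictly in $i$ exactly as in the proof of Theorem~\ref{THM:seprat}, forcing $\nu_p\big(\sum \ell_i c_i\big) = \nu_p(c_\rho\ell_\rho)$ to be finite, contradicting the relation $=0$. This shows every pole of $a$ in $t$ that is not ``at a $\bF(t)$-point'' must be cancellable — more precisely, one deduces that $a - \delta_t(p)/p \in \bF(t)$ for an appropriate $p\in\bF(\vx)[t]$ built from the $\vx$-dependent factors of the denominator and the relevant residues (the residues of $a$ with respect to $t$ that depend on $\vx$ must themselves be integers so they can be absorbed into a logarithmic derivative of a polynomial $p$). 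The shift case runs the same way using Lemma~\ref{LEM:dis} in place of Proposition~\ref{PROP:val}(iii): an irreducible $p\in\bF(\vx)[t]$ in the denominator of $a$ with $\si_t^i(p)\notin\bF(t)$ would make $\text{dis}_p\big(\sum\ell_i c_i\big)$ finite, again contradicting the vanishing, which forces the ``orbit structure'' of $a$ over $\bF(\vx)$ to be a pure shift-quotient $\si_t(p)/p$ times something in $\bF(t)$.

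The main obstacle I anticipate is the bookkeeping in the necessity direction: it is not enough to show that the $\vx$-dependent poles of $a$ in $t$ ``cancel'' — one must actually assemble the polynomial $p\in\bF(\vx)[t]$ explicitly and verify that \emph{after} dividing out $\delta_t(p)/p$ (resp.\ $\si_t(p)/p$) the remaining multiplier lies in $\bF(t)$, with no residual $\vx$-dependence hidden in the non-pole part of $a$ or in the exponents. In the differential case this requires knowing that the $\vx$-dependent residues of $a$ (viewed in the partial-fraction decomposition over $\overline{\bF(\vx)}(t)$) are integers, which is forced by the existence of the annihilating $L$; making this rigorous is essentially the argument of \cite[Lemma 4]{LeLi04} and \cite[Proposition 10]{Chen2014}, and I would follow their route — reduce to the first-order structure of $H/p$ and show its logarithmic $t$-derivative (resp.\ $t$-shift quotient) being $D_t$-separable (resp.\ $S_t$-separable) over $\bF(t)$ forces it to lie in $\bF(t)$, invoking Theorem~\ref{THM:seprat} or a direct valuation argument. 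The remaining parts (the $\bF(\vx)$-automorphism property of $\si_t$, multiplicativity of valuations, squarefree/shift-free reductions) are routine and can be cited from the earlier sections.
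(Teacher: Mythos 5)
The paper does not actually prove this theorem: it states only that the proof ``was given in \cite[Lemma 4]{LeLi04} or in \cite[Proposition 10]{Chen2014}'', so your attempt has to be judged on its own merits. Your sufficiency direction is correct, and can be streamlined: $p\in\bF(\vx)[t]$ \emph{is} itself $\partial_t$-separable (it is killed by $D_t^{\deg_t p+1}$, resp.\ by $(S_t-1)^{\deg_t p+1}$), while $g=H/p$ satisfies the first-order equation $\delta_t(g)=rg$ (resp.\ $\si_t(g)=rg$) with $r\in\bF(t)$; Proposition~\ref{PROP:closure} applied to the product $H=p\cdot g$ then finishes, so your worry that ``$p$ need not itself be $D_t$-separable'' is unfounded and the detour through conjugation is unnecessary (note that conjugating $D_t-r$ by $p$ produces coefficients involving $\vx$, so that particular variant does not work as written).

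The necessity direction contains a genuine gap. Your one concrete argument --- that for an irreducible $p\in\bF(\vx)[t]\setminus\bF[t]$ with $\nu_p(a)<0$ the valuations $\nu_p(c_i)$ of $c_i=\delta_t^i(H)/H$ decrease strictly ``exactly as in the proof of Theorem~\ref{THM:seprat}'', so that ``such a $p$ cannot occur'' --- is false, and if it were true it would contradict the theorem itself, since $a=\delta_t(p)/p+r$ has $\vx$-dependent poles whenever $p\notin\bF[t]$. The recursion is $c_{i+1}=\delta_t(c_i)+a\,c_i$, not $c_{i+1}=\delta_t(c_i)$; Proposition~\ref{PROP:val}(iii) controls only the first summand, and the two summands cancel at leading order precisely when the residue of $a$ at $p$ is a suitable non-negative integer. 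Concretely, for $H=t+x$ one has $a=1/(t+x)$ and $c_2=\delta_t(a)+a^2=0$ with $D_t^2(H)=0$; for $H=(t+x)^2e^t$ one has $c_i=1+2i/(t+x)+i(i-1)/(t+x)^2$, so $\nu_{t+x}(c_i)\geq -2$ for all $i$, and $(D_t-1)^3(H)=0$. The entire content of the theorem is the dichotomy hidden in this cancellation: an $\vx$-dependent pole of $a$ must be simple with positive integer residue (so that it is absorbed into $\delta_t(p)/p$), and everything else --- including the polynomial part of $a$ at $t=\infty$, which your finite-place argument never sees (e.g.\ $a=xt$, $H=e^{xt^2/2}$, which is not $D_t$-separable) --- must lie in $\bF(t)$. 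You do acknowledge the residue issue at the end, but there you simply defer to \cite{LeLi04} and \cite{Chen2014}; since that deferred step \emph{is} the theorem, necessity is not established. The shift case has the same defect: Lemma~\ref{LEM:dis} cannot be applied to $c_i=\prod_{j<i}\si_t^j(a)$ because consecutive terms of $\sum_i\ell_ic_i$ share $\vx$-dependent factors in numerator and denominator, and the telescoping of those factors is exactly the condition $a=\si_t(p)/p\cdot r$.
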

\begin{remark}
The above form for $\partial_t(H)/H$ can be detected by  algorithms for computing the Gosper form and its differential analogue in~\cite{Gosper1978, Almkvist1990}.
\end{remark}
\section{The Algebraic case}\label{SECT:alg}
In this section, we solve the separability problem on algebraic functions.
We first explain the connection between this problem and the following existence problem of telescopers for rational functions in three variables.
\begin{problem}\label{PROB:DSD}
Given $f\in \bF(t, x, y)$, decide whether there exists a nonzero operator $L\in \bF(t)\langle D_t \rangle$
such that $L(f) = \Delta_x(g) + D_y(h)$ for some $g, h\in \bF(t, x, y)$.
\end{problem}
By applying the Ostrogradsky-Hermite reduction in $y$ and Abramov's reduction in $x$ to  $f\in \bF(t, x, y)$, we get
\[f = \Delta_x(u) + D_y(v) + r \, \, \text{with $r = \sum_{i=1}^I \frac{\alpha_i}{y-\beta_i}$}\]
where $u, v, r \in \bF(t, x, y)$, $\alpha_i, \beta_i \in \overline{\bF(t, x)}$ and $\beta_i$'s are in distinct $\si_x$-orbits. Then $f$
has a telescoper of type $(D_t, S_x, D_y)$ if and only if $r$ does. By Theorem 4.21 in~\cite{chen2019ISSAC} or Theorem 4.43 in~\cite{chen2020JSC},
we have $r$ has a telescoper of type $(D_t, S_x, D_y)$ if and only if for each $i$ with $1\leq i \leq I$, either $\alpha_i$ is $D_t$-separable in $\overline{\bF(t, x)}$
or $\beta_i \in \overline{\bF(t)}$ and $\alpha_i\in \bF(t, x)(\beta_i)$ has a telescoper of type $(D_t, S_x)$.
The existence problem of telescopers of type $(D_t, S_x)$ in $\bF(t, x)(\beta)$ with $\beta \in \overline{\bF(t)}$ has been solved in~\cite{ChenSinger2012}.
To completely solve Problem~\ref{PROB:DSD}, it remains to solve the following separability problem.

\begin{problem}\label{PROB:alg}
Given an algebraic function $f(t, \vx)$ over $\bF(t, \vx)$, decide whether $f(t, \vx)$ is $D_t$-separable.
\end{problem}
We assume that $\bF$ is an algebraically closed and computable subfield of $\bC$ in the remaining part of this section.

\subsection{A descent theorem}  \label{SUBSECT:descent}
We first recall some basic notions and results from the theory of algebraic functions of one variable~\cite{chevalley}.
Let $k$ be a field of characteristic zero and $k(x,y)$ be an algebraic function field of one variable over $k$, i.e., the transcendence degree of $k(x, y)$ over $k$ is one. This
means there exists a polynomial $f\in k[X, Y]$ such that $f(x, y)=0$. The field of constants of $k(x,y)$ is defined as the set of elements of $k(x,y)$
which are algebraic over $k$. A subring $R$ of $k(x, y)$ is called a \emph{valuation ring} if $k\subset R \subsetneqq k(x,y)$ and for any $x\in k(x, y)$, either $x\in R$ or $x^{-1}\in R$. Any valuation ring $R$ of
$k(x, y)$ is a local ring, whose unique maximal ideal $\frakp$ is called a \emph{place} of $k(x, y)$ and the quotient field $R/\frakp$ is called the \emph{residue field} of the place $\frakp$, denoted by $\Sigma_{\frakp}$.
\begin{lem}
\label{LM-uniqueplace}
Let $k(x,y)$ and $f\in k[X, Y]$ be as above.  Assume that $(\bar{x}, \bar{y})\in k^2$ satisfies that $f(\bar{x},\bar{y})=0$ and $\frac{\partial f}{\partial Y}(\bar{x},\bar{y})\neq 0$.
Then there is a unique place $\frakp$ of $k(x,y)$ containing $x-\bar{x}$ and $y-\bar{y}$. Furthermore, the residue field $\Sigma_{\frakp}$ of $\frakp$ is isomorphic to $k$ and $k$ is the field of constants of $k(x,y)$.
\end{lem}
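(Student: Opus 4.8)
The plan is to use the Implicit Function Theorem over $k$ together with the standard correspondence between places of an algebraic function field and discrete valuations. First I would set up the local picture: since $\frac{\partial f}{\partial Y}(\bar x,\bar y)\neq 0$, by the implicit function theorem (in the formal/algebraic sense, over the complete local ring $k[[x-\bar x]]$) there is a unique power series $\phi(x-\bar x)\in k[[x-\bar x]]$ with $\phi(0)=\bar y$ and $f(x,\phi(x-\bar x))=0$. This gives a $k$-embedding $\iota\colon k(x,y)\hookrightarrow k((x-\bar x))$ sending $y\mapsto \phi(x-\bar x)$; the restriction to $k(x,y)$ of the $(x-\bar x)$-adic valuation on $k((x-\bar x))$ is a discrete valuation of $k(x,y)$, and its valuation ring $R$ is a place $\frakp$ whose maximal ideal contains $x-\bar x$ and $y-\bar y = \phi(x-\bar x)-\bar y$. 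The residue field of this place is $k$, since already $k((x-\bar x))$ has residue field $k$ and $k\subset \Sigma_\frakp\subset k$.

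Next I would establish uniqueness. Suppose $\frakq$ is any place of $k(x,y)$ with $x-\bar x,\ y-\bar y\in\frakq$. Restricting to the subfield $k(x)$, $\frakq$ induces a place of $k(x)$ containing $x-\bar x$; the only such place of the rational function field $k(x)$ is the one centered at $x=\bar x$, with uniformizer $x-\bar x$. So $\frakq$ lies over this place, hence has ramification index $e\ge 1$ and a completion $\widehat{k(x,y)_\frakq}$ which is a finite extension of $k((x-\bar x))$. But the minimal polynomial of $y$ over $k(x)$ divides $f(x,Y)$, and $f(x,Y)$ factors over $k((x-\bar x))$ — by Hensel's lemma applied to $f(\bar x + (x-\bar x),Y)$, whose reduction mod $(x-\bar x)$ has the simple root $\bar y$ — as $(Y-\phi(x-\bar x))\cdot g(x,Y)$ with $g(\bar x,\bar y)\neq 0$. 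Thus the only factor of the minimal polynomial of $y$ that is compatible with $y\equiv \bar y$ at $\frakq$ is the linear factor $Y-\phi$, forcing $e=1$ and forcing the embedding of $k(x,y)$ into its completion to agree with $\iota$. Hence $\frakq=\frakp$. Finally, because $k(x,y)$ admits a place with residue field $k$, every element algebraic over $k$ must lie in $k$ (its image in the residue field is algebraic over $k$, i.e. in $k$, and an element of a function field that is algebraic over $k$ is determined by such residue data, or more simply: the field of constants embeds into the residue field $k$ and contains $k$), so $k$ is exactly the field of constants of $k(x,y)$.

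The main obstacle I expect is making the uniqueness argument fully rigorous without hand-waving about completions: one must argue carefully that a place $\frakq$ lying over the $x=\bar x$ place, together with the condition $y-\bar y\in\frakq$, pins down which irreducible factor of the (not necessarily irreducible) polynomial $f(x,Y)$ over $k((x-\bar x))$ governs $\frakq$. The clean way is to pass to the integral closure: let $A$ be the integral closure of the local ring $k[x]_{(x-\bar x)}$ in $k(x,y)$; the places of $k(x,y)$ over $x=\bar x$ correspond to the maximal ideals of $A$, equivalently to the factorization of $f(\bar x,Y)$ into coprime powers of irreducibles over the residue field $k$ (via Hensel/Dedekind, using that $\frac{\partial f}{\partial Y}(\bar x,\bar y)\neq 0$ makes $\bar y$ a simple root, so it contributes exactly one unramified place). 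The condition $y\equiv\bar y$ selects that one. I would structure the write-up so that the existence half is a short IFT computation and the uniqueness half is this Hensel/integral-closure argument, and I would handle the field-of-constants claim last as an easy corollary of the residue field being $k$.
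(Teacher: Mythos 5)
Your proof is correct, but it reaches the conclusion by a genuinely different route than the paper. The paper gets existence by citing Chevalley (Corollary 2, p.~8) and proves uniqueness by pure local algebra: writing $f$ in powers of $X-\bar{x}$ and $Y-\bar{y}$ shows $y-\bar{y}\in(x-\bar{x})R$ for the localization $R$ of $k[x,y]$ at $\frakm=(x-\bar{x},y-\bar{y})$, so $R$ is a one--dimensional regular local ring, i.e.\ a DVR, and hence equals the valuation ring of \emph{any} place whose maximal ideal contains $x-\bar{x}$ and $y-\bar{y}$; the residue field and the field of constants then come from $R/\frakm\cong k$ exactly as in your final step. You instead construct the place explicitly via Hensel's lemma (the series $\phi\in k[[x-\bar{x}]]$ with $\phi(0)=\bar{y}$ and $f(x,\phi)=0$) and prove uniqueness from the classification of places above $x=\bar{x}$ by the irreducible factors of the minimal polynomial of $y$ over the completion $k((x-\bar{x}))$, the simple root $\bar{y}$ singling out the linear factor $Y-\phi$. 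Your route costs the machinery of completions and splitting of places, but it produces the embedding $k(x,y)\hookrightarrow k((x-\bar{x}))$ up front, which is exactly what the paper extracts afterwards in Remark~\ref{RM-series} and uses in Theorem~\ref{THM:descent}; the paper's DVR argument is shorter and more self-contained. One shared caveat: both your argument (in asserting that $y\mapsto\phi$ defines an embedding) and the paper's (in asserting that $\frakm$ is a proper, hence maximal, ideal of $k[x,y]$) implicitly require $f$ to be irreducible, i.e.\ to generate the full ideal of relations between $x$ and $y$; the lemma is false without this (e.g.\ $y^2=x$ with $f=(Y^2-X)(Y-1)$ at $(0,1)$), so you may want to state that hypothesis explicitly, but this does not distinguish your proof from the paper's.
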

\begin{proof}
By Corollary 2 of~\cite[page 8]{chevalley}, there is a place of $k(x,y)$ containing $x-\bar{x}$ and $y-\bar{y}$, say $\frakp$. Let $\fraka$ be the discrete valuation ring (DVR) with respect to $\frakp$. It is easy to see that the ring $k[x,y]$ is contained in $\fraka$. Let $\frakm$ be the ideal in $k[x,y]$ generated by $x-\bar{x}$ and $y-\bar{y}$. Then $\frakm$ is a maximal ideal. Denote by $R$ the localization of $k[x,y]$ at $\frakm$ and we still use $\frakm$ to denote the unique maximal ideal of $R$. Rewriting $f(X,Y)$ as a polynomial in $X-\bar{x}, Y-\bar{y}$ yields that
\[
  \left(\frac{\partial f}{\partial Y}(\bar{x},\bar{y})+(Y-\bar{y})A \right)(Y-\bar{y})+(X-\bar{x})B
\]
for some $A,B\in k[X-\bar{x},Y-\bar{y}]$.
Since $\frac{\partial f}{\partial Y}(\bar{x},\bar{y})\neq 0$, one has that $\frac{\partial f}{\partial Y} (\bar{x},\bar{y})+(y-\bar{y})A(x-\bar{x},y-\bar{y})$ is invertible in $R$ and so $y-\bar{y}\in (x-\bar{x})R$. It implies that $R$ is a regular local ring, i.e., a DVR. Therefore $R=\fraka$, since $R\subset \fraka$. This concludes that $\frakp$ is unique.

We have that $\Sigma_{\frakp}=R/\frakm=k[x,y]/\frakm\cong k$. Since the field of constants of $k(x,y)$ is a subfield of $\Sigma_{\frakp}$ under the natural homomorphism, it coincides with $k$.
\end{proof}
\begin{remark}
\label{RM-series}
   Let $k(x,y)$ and $(\bar{x},\bar{y})$ be as in Lemma~\ref{LM-uniqueplace}. The above proof implies that $k(x,y)$ can be embedded into the field of formal Laurent series $k((x-\bar{x}))$.
\end{remark}

\begin{thm}\label{THM:descent} Let $\bF\subseteq k \subseteq \CX$ be fields with $\bF$ being algebraically closed. Let $f(t,Y)$ be an irreducible polynomial  in $k[t,Y]$.  Let $k(t,y)$ be the  quotient field of $k[t,Y]/\langle f\rangle$. Assume that
\begin{enumerate}
\item\label{condition1} the places of $k(t)$ that ramify in $k(t,y)$ are defined over $\bF$, {i.e., their uniformizing parameters can be chosen to be $1/t$ or $t-c$ with $c\in \bF$}.
\item\label{condition2} there exists a solution $(a, \alpha)$ of the system
\begin{eqnarray*}
f(a,\alpha) &=& 0  \label{eqn1}, \\
\frac{\d f}{\d Y}(a,\alpha) & \neq & 0 \label{eqn2},
\end{eqnarray*}where $a \in \bF$ and $\alpha \in k$.
\end{enumerate}
Then there exists $\beta\in \overline{\bF(t)}$ such that $k(t,y)= k(t,\beta)$.
\end{thm}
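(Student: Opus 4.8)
The plan is to view $k(t,y)$ as the function field of the smooth projective curve $\tilde C$ over $k$ and $\pi\colon\tilde C\to\mathbb{P}^1_k$ as the finite cover induced by $k(t)\hookrightarrow k(t,y)$, and then to descend this cover first geometrically (over an algebraic closure $\overline k$ of $k$) down to $\bF$, and afterwards arithmetically from $\overline k$ back to $k$. Applying Lemma~\ref{LM-uniqueplace} to the solution $(a,\alpha)$ of condition~(\ref{condition2}) (with the variable called $X$ there taken to be $t$) gives two things at once: $k$ is the exact field of constants of $k(t,y)$, so $f$ is geometrically irreducible over $k$ and $\overline k(t,y):=\overline k\otimes_k k(t,y)$ is a field with $[\overline k(t,y):\overline k(t)]=[k(t,y):k(t)]$; and there is a $k$-rational place $\frakp_0$ of $k(t,y)$ containing $t-a$ and $y-\alpha$ at which $t-a$ is a uniformizing parameter, so $\frakp_0$ lies over the place $t=a$ of $k(t)$ and is unramified there. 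Finally, condition~(\ref{condition1}) says the branch locus of $\pi$, viewed as a reduced closed subscheme of $\mathbb{P}^1_k$, has the form $B_0\times_{\bF}k$ for a finite set $B_0$ of $\bF$-rational points of $\mathbb{P}^1_{\bF}$.

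For the geometric descent I would invoke Riemann's existence theorem — equivalently, the invariance of the fundamental group of $\mathbb{P}^1$ minus finitely many $\bF$-rational points under extension of the algebraically closed base field in characteristic zero — to conclude that the connected finite étale cover $\pi^{-1}(\mathbb{P}^1_{\overline k}\setminus B_0)\to\mathbb{P}^1_{\overline k}\setminus B_0$ is the base change of a connected finite étale cover of $\mathbb{P}^1_{\bF}\setminus B_0$. Passing to smooth projective models yields a smooth projective curve $C_0/\bF$ and a finite map $C_0\to\mathbb{P}^1_{\bF}$ with $C_0\times_{\bF}\overline k\cong\tilde C\times_k\overline k$ over $\mathbb{P}^1_{\overline k}$. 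Writing $\bF(C_0)=\bF(t,\beta_0)$ for a primitive element $\beta_0\in\overline{\bF(t)}$, this identification reads $\overline k(t,y)=\overline k(t,\beta_0)$.

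The crux, and the step I expect to be the main obstacle, is the arithmetic descent from $\overline k$ to $k$; this is exactly where condition~(\ref{condition2}) is indispensable. Since $\bF$ is algebraically closed, $\bF(t,\beta_0)/\bF$ is a regular extension, so $k(t,\beta_0)\cap\overline k=k$ and $\overline k(t,\beta_0)/k(t,\beta_0)$ is Galois with group $\Gal(\overline k/k)$, which fixes $k(t)$ and hence carries $y$ to another root of $f(t,Y)$ inside $\overline k(t,\beta_0)$. Now the $k$-point $P_0$ corresponding to $\frakp_0$ maps, under $\tilde C\times_k\overline k\cong C_0\times_{\bF}\overline k\to C_0$, to a point $Q_0$ of $C_0$ lying over $t=a$; since $t=a$ is a closed point and $C_0\to\mathbb{P}^1_{\bF}$ is finite, $Q_0$ is closed, hence $\bF$-rational, and one checks that $C_0\to\mathbb{P}^1_{\bF}$ is unramified at $Q_0$. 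The place $\mathfrak P$ of $\overline k(C_0)=\overline k(t,\beta_0)$ attached to $Q_0$ is then defined over $\bF$ — hence fixed by $\Gal(\overline k/k)$ — and restricts to $\frakp_0$ on $k(t,y)$, so $y$ is $\mathfrak P$-integral with residue $\alpha\in k$. Consequently, for any $\sigma\in\Gal(\overline k/k)$ the root $\sigma(y)$ is $\mathfrak P$-integral with residue $\sigma(\alpha)=\alpha$; because $\mathfrak P$ is unramified over $t=a$ the corresponding completion of $\overline k(t,\beta_0)$ is $\overline k((t-a))$, and since $\tfrac{\partial f}{\partial Y}(a,\alpha)\neq 0$, Hensel's lemma forces $f(t,Y)$ to have exactly one root in $\overline k((t-a))$ congruent to $\alpha$ modulo $t-a$. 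Hence $\sigma(y)=y$ for all $\sigma$, so $y\in\overline k(t,\beta_0)^{\Gal(\overline k/k)}=k(t,\beta_0)$, i.e.\ $k(t,y)\subseteq k(t,\beta_0)$.

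It then only remains to note the equality of degrees $[k(t,\beta_0):k(t)]=[\bF(t,\beta_0):\bF(t)]=[\overline k(t,\beta_0):\overline k(t)]=[\overline k(t,y):\overline k(t)]=[k(t,y):k(t)]$ — the first step using regularity of $\bF(t,\beta_0)/\bF$, the last using geometric irreducibility of $f$ — whence $k(t,y)=k(t,\beta_0)$, and $\beta=\beta_0\in\overline{\bF(t)}$ works. The difficulty is really confined to the arithmetic descent: geometric connectedness together with condition~(\ref{condition1}) alone would only produce a cover over $\overline k$ that might be a nontrivial twist of a cover defined over $\bF$, and such twists can genuinely break the statement; the $k$-rational, unramified point furnished by condition~(\ref{condition2}) and Lemma~\ref{LM-uniqueplace}, used through the Hensel uniqueness argument above, is precisely what eliminates the twist. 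The verification that $Q_0$ is unramified and $\mathfrak P$ is $\bF$-rational even when $t=a$ happens to lie in the branch locus is routine, using that the fibre of $C_0\times_{\bF}\overline k\to C_0$ over $Q_0$ is a single $\overline k$-point.
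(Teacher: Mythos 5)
Your proposal is correct, and its skeleton matches the paper's: both proofs use the simple point $(a,\alpha)$ to obtain absolute irreducibility, both then invoke descent of covers of the projective line with $\bF$-rational branch locus over an algebraically closed overfield (your appeal to Riemann's existence theorem is precisely the content of the Malle--Matzat proposition the paper cites, applied there over $\CX$ rather than $\overline{k}$), and both return to condition~2 and the unramified place above $t=a$ to finish. The genuine difference is the mechanism of the final descent from the big field down to $k$. The paper never performs Galois descent: from $\CX(t,y)=\CX(t,\beta)$ it extracts a linear dependence of the finite set $\{t^j\beta,\ t^jy^i\}$ over $\CX$, notes that linear dependence over $D_t$-constants is detected by a Wronskian and hence descends to the subfield $k(t,y,\beta)$, and then shows that the constants of $k(t,y,\beta)$ are exactly $k$ by embedding that field into $k((t-a))$ via the place furnished by Lemma~\ref{LM-uniqueplace} --- this is where condition~2 is consumed. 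You instead take invariants under $\Gal(\overline{k}/k)$ and rule out twists by a Hensel-uniqueness argument at the $\bF$-rational unramified point, using $\frac{\partial f}{\partial Y}(a,\alpha)\neq 0$ directly. Your route makes the role of condition~2 (eliminating twists of the descended cover) conceptually transparent and is the ``standard'' arithmetic-geometric argument, at the cost of the bookkeeping needed to verify that the chosen place is Galois-stable and unramified; the paper's differential-algebra shortcut is quicker to write and needs only the single embedding into $k((t-a))$. Both are complete proofs of the statement.
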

\begin{proof}
{ Since $(a,\alpha)$ is a simple point of $f(t,Y)=0$ in $k^2$}, by \cite{ragot}, $f(t,Y)$ is absolutely irreducible over $k$. {This implies that $f$ is irreducible over $\CX$, i.e., $\CX[t,Y]/\langle f \rangle$ is an integral domain}. Let $\CX(t,y)$ be the quotient field of $\CX[t,Y]/\langle f \rangle$. Then $k(t,y)$ can be considered as a subfield of $\CX(t,y)$ under the natural homomorphism. From Theorem 3 in~\cite[page 92]{chevalley}, none of places of $\CX(t,y)$ is ramified with respect to $k(t,y)$. Therefore the condition 1 holds for $\CX(t,y)$. Proposition 2.1 in~\cite[page 10]{malle_matzat} states that there is $\beta \in \overline{\bF(t)}$ such that  $\CX(t,y)=\CX(t,\beta)$. Now there are $g_0(t),\cdots,g_{n-1}(t) \in\CX(t)$ such that
\begin{equation}
\label{EQ-lineardependence}
       \beta=\sum_{i=0}^{n-1} g_i(t)y^i,
\end{equation}
where $n=[\CX(t,y):\CX(t)]$. For each $i$, let $g_i = {q_i}/{q}$ with $q_i, q \in \CX[t]$ and let $s = \max_i\{\deg_tq_i, \deg_tq\}$. Equation~(\ref{eqn1}) implies that
$q\beta = \sum_{i=0}^{n-1} q_i y^i$
and therefore the set
\[\left\{t^j\beta,\, \,  t^jy^i\right\}_{j = 0, \ldots s,i = 0, \ldots n-1}\]
is linearly dependent over $\CX$.  This set lies in $k(t,y,\beta)$ and, since it is linearly dependent over $D_t$-constants in a larger differential field, it is linearly dependent over $D_t$-constants in $k(t,y,\beta)$. Denote by $\tilde{k}$ the set of $D_t$-constants of $k(t,y,\beta)$. If $\tilde{k}=k$, then $\beta\in k(t,y)$, which will conclude the proposition. Therefore it suffices to prove that $\tilde{k}=k$. It is easy to verify that $\tilde{k}$ coincides with the field of constants of $k(t,y,\beta)$. In the following, we will show that the field of constants of $k(t,y,\beta)$ is equal to $k$.

From Remark~\ref{RM-series}, $k(t,y)$ and $\CX(t,y)$ can be embedded into $k((t-a))$ and $\CX((t-a))$ respectively. We will consider them as the subfields of $k((t-a))$ and $\CX((t-a))$ respectively. Since $\beta\in \CX(t,y)\cap \overline{\bF(t)}$, $\bF$ is algebraically closed and $a\in \bF$, $\beta \in \bF((t-a))$. Therefore, $k(t,y,\beta)\subseteq k((t-a))$. Since $k$ is algebraically closed in $k((t-a))$, the field of constants of $k(t,y,\beta)$ is equal to $k$. This completes the proof. \end{proof}

 \subsection{Separability criteria }  \label{SUBSECT:sepalg}

Let $P = \sum_{i=0}^n A_i Y^i \in \bF(t, \vx)[Y]$ be the minimal polynomial of $y\in \overline{\bF(t, \vx)}$. We can always pick $(a, \alpha) \in \bF \times \overline{\bF(\vx)}$ such that
\begin{equation}
 \label{EQ:simplepoint}
   A_n(\vx,a)\neq 0,\,\,P(\vx,a, \alpha)=0\,\,\mbox{and}\,\,\frac{\partial P}{\partial Y}(\vx,a,\alpha)\neq 0.
 \end{equation}
Let $K=\bF(\vx,\alpha)$ and $\ell=[K(t, y): K(t)]$.
Asume that $z\in \overline{\bF(t, \vx)}$ also satisfies the equation $P(z)=0$.  Then $z$ and $y$ are conjugated over $\bF(t, \vx)$. By Theorem 3.2.4 in~\cite{BronsteinBook}, any field automorphism of the splitting
field of $P$ commutes with the derivation $D_t$. So for any $L \in \bF(t)\langle D_t \rangle$, $L(z)=0$ if and only if $L(y)=0$. Thus to detect if there is a nonzero $L\in \bF(t)\langle D_t \rangle$ such that $L(y)=0$, it suffices to detect if there exists such operator for $z$. In the following, we will characterize all possible  $D_t$-separable algebraic functions.

Assume that $y$ is $D_t$-separable, i.e., there exists a nonzero $L \in \bF(t)\langle D_t\rangle$ such that $L(y)=0$. Let $\frakp$ be a place of $K(t)$ and $\frakq$ a place of $K(t,y)$ that is ramified with respect to $\frakp$. Suppose that $p$ and $q$ are uniformizing parameters of $\frakp$ and $\frakq$ respectively, and $e$ is the corresponding ramification index. Then $p=a q^e$ for some invertible $a$ in the DVR with respect to $\frakq$.  Furthermore assume that $p$ is an irreducible polynomial in $K[t]$. Let $\wp$ be a place of $\CX(t,y)$ lying above $\frakq$. Then by Theorem 3 in~\cite[page 92]{chevalley}), $\wp$ is not ramified with respect to $\frakq$ and so $q$ is a uniformizing parameter of $\wp$. Since $p\in \wp$, the uniformizing parameter of $\wp\cap \CX(t)$ can be selected as a factor of $p$, say $t-c$ for some $c\in \CX$. It is easy to see that $p/(t-c)$ is an invertible element in the DVR with respect to $\wp$. It implies that $t-c=\bar{a}q^e$ for some invertible element $\bar{a}$ {and thus $K(t,y)$ can be embedded into $\CX((t-c)^{1/e})$}. Therefore $y\in \CX((t-c)^{1/e}))$ and $c$ is a singular point of $L$. Note that the singular points of $L$ lie in the algebraically closed field $\bF$. So $c\in \bF$ and then $p=b(t-c)$ for some $b\in K$. {In other words, $t-c$ is a uniformizing parameter of $\frakp$}. Hence $K(t,y)$ satisfies the condition 1 of Theorem~\ref{THM:descent}. By Theorem~\ref{THM:descent}, there is $\beta \in \overline{\bF(t)}$ such that $K(t,y)=K(t,\beta)$.
We now characterize separable algebraic functions as follows.
\begin{prop}
\label{PROP:form}
Let $P = \sum_{i=0}^n A_i Y^i\in \bF[t, \vx][Y]$ with $A_n \neq 0$ be the minimal polynomial of $y \in \overline{\bF(t, \vx)}$. Let $K = \bF(\vx)(\alpha)$ with $\alpha\in \overline{\bF(\vx)}$ be as in~\eqref{EQ:simplepoint} and $\beta \in \overline{k(t)}$ be such that $K(t,y) = K(t, \beta)$. If $y$ is $D_t$-separable, then
  \begin{enumerate}
  \item [$(1)$]$A_n(\bfx,t)$ is split, i.e., $A_n(\bfx,t)=a(\bfx)b(t)$, where $a(\bfx)\in \bF[\bfx], b(t)\in \bF[t]$, and
  \item [$(2)$]
   \begin{equation}
   \label{EQ:form}
        y=\frac{1}{b(t)q(t)}\sum_{i=0}^{\ell-1} a_i(t)\beta^i,
    \end{equation}
   where $\ell=[K(t,y):K(t)], a_i(t)\in K[t]$ and $q(t)$ is the discriminant of the base $\{1,\beta,\cdots,\beta^{\ell-1}\}$.
 \end{enumerate}
\end{prop}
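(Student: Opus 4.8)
The plan is as follows. Suppose $y$ is $D_t$-separable, and fix a nonzero operator $L=\sum_{i=0}^{d}\ell_i D_t^{\,i}\in\bF[t]\langle D_t\rangle$ with $\ell_d\neq 0$ and $L(y)=0$ (clearing denominators, we may take the $\ell_i$ in $\bF[t]$). As explained in the paragraph preceding the proposition, one already obtains $\beta\in\overline{\bF(t)}$ with $K(t,y)=K(t,\beta)$, where $K=\bF(\bfx)(\alpha)$ and $\ell=[K(t,y):K(t)]$. After multiplying $\beta$ by a suitable nonzero element of $\bF[t]$ we may in addition take $\beta$ integral over $\bF[t]$, hence over $K[t]$: this changes neither $K(t,\beta)$ nor the membership $\beta\in\overline{\bF(t)}$, and only multiplies the discriminant $q(t)$ of $\{1,\beta,\dots,\beta^{\ell-1}\}$ by a nonzero element of $\bF[t]$, so it is harmless; from now on $\beta$ denotes this integral choice, for which $q(t)\in K[t]$, and $\{1,\beta,\dots,\beta^{\ell-1}\}$ is a $K(t)$-basis of $K(t,\beta)=K(t,y)$ since $\beta$ has degree $\ell$ over $K(t)$. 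I would establish $(1)$ first, by a pole-counting argument over $\overline{\bF(\bfx)}(t)$, and then deduce $(2)$ from $(1)$ by a trace/Cramer computation over $K[t]$.

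For $(1)$: recall from the discussion above that every root $z$ of $P$ in $\overline{\bF(t,\bfx)}$ is conjugate to $y$ over $\bF(t,\bfx)$ and that $L(z)=0$ for each such $z$; hence every conjugate of $y$ is again $D_t$-separable. Viewing $\overline{\bF(t,\bfx)}=\overline{\,\overline{\bF(\bfx)}(t)\,}$, regard each $z$ as an algebraic function of $t$ over the field $\overline{\bF(\bfx)}$. Since $\ell_d\in\bF[t]$ and $\bF$ is algebraically closed, every place $(t-c)$ of $\overline{\bF(\bfx)}(t)$ with $c\in\overline{\bF(\bfx)}\setminus\bF$ is an ordinary point of $L$; and an algebraic function annihilated by a linear ODE is regular — no pole and no ramification — at every ordinary point of that ODE. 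Thus each conjugate $z_j$ of $y$ is regular over such a place. Now $P=A_n\prod_j(Y-z_j)$, so the elementary symmetric functions of the $z_j$ are $\pm A_{n-i}/A_n\in\bF(t,\bfx)$, and regularity of all the $z_j$ at $(t-c)$ forces $\nu_{(t-c)}(A_j)\ge\nu_{(t-c)}(A_n)$ for every $j$. On the other hand, primitivity of $P$ gives a Bézout identity $\sum_j u_j A_j=1$ with $u_j\in\bF(\bfx)[t]$, which survives base change to $\overline{\bF(\bfx)}[t]$, so $\min_j\nu_{(t-c)}(A_j)=0$; hence $\nu_{(t-c)}(A_n)=0$ for all $c\in\overline{\bF(\bfx)}\setminus\bF$. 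Therefore all the zeros in $t$ of $A_n$ over $\overline{\bF(\bfx)}$ lie in $\bF$, and since $A_n\in\bF[t,\bfx]$ this forces $A_n=a(\bfx)b(t)$ with $b\in\bF[t]$ and $a\in\bF[\bfx]$, which is $(1)$.

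For $(2)$: from $A_n y^n+\dots+A_0=0$, multiplying by $A_n^{\,n-1}$ exhibits a monic equation for $A_n y$ over $\bF[t,\bfx]$, so $A_n y$ is integral over $\bF[t,\bfx]$; since $A_n=a(\bfx)b(t)$ and $a(\bfx)$ is a unit of $K[t]$, it follows that $b(t)\,y$ is integral over $K[t]$. Write $y=\sum_{i=0}^{\ell-1}c_i\beta^i$ with $c_i\in K(t)$, and apply $\operatorname{Tr}:=\operatorname{Tr}_{K(t,\beta)/K(t)}$ to $y\beta^j$ for $j=0,\dots,\ell-1$: this gives the linear system $\operatorname{Tr}(y\beta^j)=\sum_i c_i\operatorname{Tr}(\beta^{i+j})$, whose coefficient matrix $G=\bigl(\operatorname{Tr}(\beta^{i+j})\bigr)_{0\le i,j\le\ell-1}$ has determinant $q(t)$ and is invertible. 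As $\beta$ is integral over $K[t]$ and $K[t]$ is integrally closed, every entry of $G$ lies in $K[t]$; and as $b(t)\,y\,\beta^j$ is integral over $K[t]$, its trace equals $b(t)\operatorname{Tr}(y\beta^j)\in K[t]$, so $\operatorname{Tr}(y\beta^j)\in\tfrac1{b(t)}K[t]$. By Cramer's rule $c_i=\det(G_i)/q(t)$, where $G_i$ replaces one column of $G$ by $\bigl(\operatorname{Tr}(y\beta^j)\bigr)_j$; expanding along that column gives $c_i\in\tfrac1{b(t)q(t)}K[t]$. Writing $c_i=a_i(t)/(b(t)q(t))$ with $a_i\in K[t]$ yields $y=\tfrac1{b(t)q(t)}\sum_{i=0}^{\ell-1}a_i(t)\beta^i$, which is $(2)$.

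The crux is $(1)$, specifically the input that a $D_t$-separable algebraic function, seen as a function of $t$, has no poles or branch points away from $\bF\cup\{\infty\}$; this is the classical principle that the singularities of an algebraic solution of a linear ODE are among the singularities of the ODE, combined with the base-change bookkeeping between $\bF(\bfx),\overline{\bF(\bfx)},K(t,y)$ and $\CX(t,y)$ that was already set up before the proposition. Granting $(1)$, part $(2)$ is a routine valuation-and-trace computation, the only delicacy being the normalization of $\beta$ fixed at the outset.
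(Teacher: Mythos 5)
Your proof is correct, and part $(2)$ is essentially the paper's argument: the authors note that $A_n y$ and $\beta$ lie in the integral closure $S$ of $K[t]$ in $K(t,y)$ and invoke the standard fact that $S\subseteq \frac{1}{q(t)}\sum_i K[t]\beta^i$ when $q$ is the discriminant of the power basis --- exactly the trace/Cramer computation you write out. (Like you, they implicitly need $\beta$ integral over $K[t]$; in their construction this holds because $\beta$ is a root of a polynomial that is monic over $\bF[t]$ after step $(2)$ of the algorithm, so your explicit normalization just supplies the same hypothesis and is harmless.) Where you genuinely diverge is part $(1)$. The paper observes that all conjugates of $y$ are annihilated by the same $L$, that the quotients $A_i/A_n$ are up to sign the elementary symmetric functions of these conjugates and hence $D_t$-separable by the closure properties of Proposition~\ref{PROP:closure}, and then applies Theorem~\ref{THM:seprat} (a rational function is $D_t$-separable iff semi-split) to conclude that each reduced denominator $q_i$ of $A_i/A_n$ is split; since $A_n=\operatorname{lcm}_i q_i$ by primitivity, $A_n$ is split. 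You instead argue locally: at a place $t-c$ with $c\in\overline{\bF(\bfx)}\setminus\bF$ the operator $L$ is non-singular, so every conjugate of $y$ is pole-free and unramified there, whence $\nu_{t-c}(A_i)\geq \nu_{t-c}(A_n)$ for all $i$, and primitivity forces $\nu_{t-c}(A_n)=0$. Both routes are sound; the paper's is shorter because it recycles the rational-function criterion already established in Section~\ref{SECT:rational}, while yours is self-contained modulo the classical fact that an algebraic solution of a linear ODE is regular at every ordinary point (which does hold formally over $\overline{\bF(\bfx)}$, by the Cauchy existence theorem for power-series solutions together with the dimension bound on the solution space over the constants). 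Your singularity analysis is in fact the same mechanism the paper uses in Section~\ref{SUBSECT:sepalg} to verify condition 1 of Theorem~\ref{THM:descent}, so it fits naturally with the surrounding material.
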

\begin{proof}
Let $r_i = A_i/A_n = p_i/q_i \in \bF(t, \vx)$ with $0\leq i\leq n$, $p_i, q_i \in \bF[t, \vx]$ and $\gcd(p_i, q_i) = 1$. Since $y$ is $D_t$-separable, so are all of the conjugate roots of $P(Y)=0$.
By Vieta's formulas, the $r_i$'s are polynomials of these roots, which therefore are also $D_t$-separable by Proposition~\ref{PROP:closure}.
By Theorem~\ref{THM:seprat}, $q_i$ is split for all $i$ with $0\leq i\leq n$. Since $A_n$ is the LCM of the $q_i$'s, we have $A_n(\bfx,t)$ is also split.

Let $S$ be the integral closure of $K[t]$ in $K(t,y)$. Then $\beta, A_n(\bfx,t)y\in S$. Since $\{1,\beta,\cdots,\beta^{\ell-1}\}$ is a base of $K(t,y)$ over $K(t)$, one has that
$$
   A_n(\bfx,t)y=\frac{1}{q(t)}\sum_{i=0}^{\ell-1}g_i(t)\beta^i,
$$
where $g_i(t)\in K[t]$. Setting $a_i(t)=g_i(t)/a(\bfx)$, we obtain the required expression for $y$.
\end{proof}

Recall that $K=\bF(\vx,\alpha)$ and $\ell = [K(t, y) : K(t)]$. Since the $i$-th derivative of $y$ is also in $K(t, y)$ for any $i\in \bN$, we have that $Y=(1,y,y^2,\cdots, y^{\ell-1})^t$ satisfies a
linear differential system of the form
\begin{equation}
\label{EQ:associateddifferential}
Y'=AY, \quad \text{where $A \in \text{Mat}_\ell(K(t))$}.
\end{equation}
We will call (\ref{EQ:associateddifferential}) the associated differential equation of $y$ over $K(t)$. The following proposition will allow us to design an algorithm for testing the separability of algebraic functions.
\begin{prop}
\label{PROP:criterion}
   Let $y$ and $K$ be as above. Assume that (\ref{EQ:associateddifferential}) is the associated differential equation of $y$ over $K(t)$. Then $y$ is $D_t$-separable if and only if there is an invertible matrix $G$ with entries in $K[t]$ such that
   $$
      G^{-1}G'- G^{-1}AG\in \Mat_\ell(\bF(t)).
   $$
\end{prop}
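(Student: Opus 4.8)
The plan is to reformulate $D_t$-separability of $y$ as a statement about the associated system $Y' = AY$ and then translate "defined over $\bF(t)$" into the existence of the gauge transformation $G$. The key observation is that $y$ is $D_t$-separable if and only if the whole vector $Y = (1, y, \dots, y^{\ell-1})^t$ is $D_t$-separable: one direction is trivial (a component of a separable vector is separable), and the other follows from Proposition~\ref{PROP:closure}, since the powers $y^i$ are products of the separable function $y$. Moreover $Y$ is $D_t$-separable precisely when each $y^i$ is annihilated by some operator in $\bF(t)\langle D_t\rangle$, and, using that the $y^i$ span a $D_t$-stable $K(t)$-vector space on which $D_t$ acts by the matrix $A$, this is equivalent to saying that the $K(t)\langle D_t\rangle$-module $M$ defined by $(\ref{EQ:associateddifferential})$ descends to an $\bF(t)\langle D_t\rangle$-module — i.e. there is a basis of $M$ (as a $K(t)$-space) in which the connection matrix has entries in $\bF(t)$.

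**Next I would** make the descent statement precise in matrix terms. Changing basis by an invertible $G \in \GL_\ell(K(t))$ transforms $A$ into $G^{-1}AG - G^{-1}G'$ (or $G^{-1}G' - G^{-1}AG$, up to the sign convention used in the statement), so $M$ has a basis with connection matrix in $\Mat_\ell(\bF(t))$ iff there exists $G \in \GL_\ell(K(t))$ with $G^{-1}G' - G^{-1}AG \in \Mat_\ell(\bF(t))$. The remaining point is to upgrade the field of entries of $G$ from $K(t)$ to $K[t]$: given any $G \in \GL_\ell(K(t))$ achieving the descent, I would clear denominators. Write $G = \tilde G / d$ with $\tilde G \in \Mat_\ell(K[t])$ and $d \in K[t]$; since $d$ is a scalar, $G^{-1}G' - G^{-1}AG = \tilde G^{-1}\tilde G' - \tilde G^{-1}A\tilde G$, so $\tilde G$ already works — but $\tilde G$ need not be invertible over $K(t)$ after clearing, so more care is needed. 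A cleaner route: use that $Y$ generates a cyclic $K(t)\langle D_t\rangle$-module (or pass to a cyclic vector), reduce to the companion form, and observe that a cyclic module is defined over $\bF(t)$ iff its single defining operator lies in $\bF(t)\langle D_t\rangle$ after an appropriate change of cyclic vector; then clearing denominators of a transition matrix between two cyclic vectors only introduces polynomial entries, and invertibility is automatic because both vectors are cyclic. I would phrase this via Proposition~\ref{PROP:form}: equation~$(\ref{EQ:form})$ exhibits $y$, hence all of $Y$, in terms of $\beta$ with coefficients in $K[t]$, which gives an explicit polynomial transition matrix between the basis $\{1, y, \dots, y^{\ell-1}\}$ and a basis built from $\{1, \beta, \dots, \beta^{\ell-1}\}$; the latter basis has connection matrix defined over $\overline{\bF(t)}$, and a short Galois-descent argument (entries fixed by $\mathrm{Gal}(K(t)/\bF(t))$, as in Theorem~\ref{THM:descent}) pushes it down to $\bF(t)$.

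**Concretely the two directions run as follows.** For the "only if" direction: assume $y$ is $D_t$-separable. By the discussion preceding the proposition, $K(t,y) = K(t,\beta)$ with $\beta \in \overline{\bF(t)}$, and Proposition~\ref{PROP:form} gives $y = \frac{1}{b(t)q(t)}\sum_{i=0}^{\ell-1} a_i(t)\beta^i$ with $a_i \in K[t]$. Taking powers and re-expanding in the basis $\{1,\beta,\dots,\beta^{\ell-1}\}$ of $K(t,\beta)$ over $K(t)$ yields an invertible matrix $G_0 \in \GL_\ell(K(t))$, in fact with a controlled denominator, relating $Y$ to $B := (1,\beta,\dots,\beta^{\ell-1})^t$; since $\beta \in \overline{\bF(t)}$ its minimal polynomial has coefficients in $\bF(t)$, so $B$ satisfies $B' = A_\beta B$ with $A_\beta \in \Mat_\ell(\bF(t))$ (companion-type). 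Combining, $G := G_0$ conjugates $A$ to $A_\beta$ up to the $G^{-1}G'$ term, and clearing the scalar denominator $b(t)q(t)$ replaces $G$ by a polynomial matrix $\tilde G \in \GL_\ell(K[t])$ (invertible over $K(t)$, with the same effect on $A$), which is exactly the asserted $G$. For the "if" direction: if such $G \in \GL_\ell(K[t])$ exists, then $\tilde A := G^{-1}G' - G^{-1}AG \in \Mat_\ell(\bF(t))$, so the vector $Z := G^{-1}Y$ satisfies $Z' = \tilde A Z$ over $\bF(t)$; hence each component of $Z$ is $D_t$-separable, and since $Y = GZ$ with $G$ polynomial (entries in $K(t) \subseteq$ a $\partial_t$-module), each $y^i$ — in particular $y$ itself — is a $\bF(t)$-linear... rather, a $K(t)\langle D_t\rangle$-combination that is $D_t$-separable by the closure properties, giving a nonzero $L \in \bF(t)\langle D_t\rangle$ with $L(y) = 0$.

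**The main obstacle** I anticipate is the bookkeeping that turns "$G$ with entries in $K(t)$" into "$G$ with entries in $K[t]$" while preserving invertibility over $K(t)$ — naive denominator-clearing can destroy invertibility. The fix is to route everything through the explicit polynomial expression~$(\ref{EQ:form})$ of Proposition~\ref{PROP:form} (and the dual expression writing $\beta^i$ back in terms of $y$, available because $q(t)$ is the discriminant of the power basis, hence invertible in $K(t)$), so that the transition matrix is polynomial by construction and invertible because it is a base change between two genuine $K(t)$-bases of the same field $K(t,y) = K(t,\beta)$.
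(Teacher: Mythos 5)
Your proof is correct and follows essentially the same route as the paper's: use the discussion preceding the proposition together with Proposition~\ref{PROP:form} to expand the powers $y^j$ in the basis $\{1,\beta,\dots,\beta^{\ell-1}\}$ with coefficients in $K[t]$ over the denominator $b(t)^jq(t)$, assemble these into a polynomial gauge matrix $G$ (invertible because it is a base change between two $K(t)$-bases of $K(t,y)=K(t,\beta)$, which the paper packages as $U(y)=GU(\beta)/(b^{\ell-1}q)$ with Vandermonde fundamental matrices), and prove the converse via the closure properties of Proposition~\ref{PROP:closure} applied to $Y=GZ$. One small slip worth flagging: clearing a scalar denominator $d$ does change $G^{-1}G'-G^{-1}AG$, namely by $(d'/d)I_\ell$, so your intermediate identity is not literally true; it is harmless here only because the denominator $b^{\ell-1}q$ lies in $\bF[t]$, which is exactly why the paper's computation produces $B-(b^{\ell-1}q)'/(b^{\ell-1}q)\in\Mat_\ell(\bF(t))$.
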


\begin{proof}
Assume that there exists a nonzero $L\in \bF(t)\langle D_t \rangle$ such that $L(y)=0$. Then by Proposition~\ref{PROP:form}, $y$ has the form (\ref{EQ:form}). Let $E$ be the Galois closure of $K(t,\beta)$ over $K(t)$. Let $\beta_1=\beta, \beta_2,\cdots, \beta_\ell$ be the conjugates of $\beta$ and $\sigma_i\in \Gal(E/K(t))$ such that $\sigma_i(\beta)=\beta_i$. Then $\sigma_1(y),\cdots,\sigma_{\ell}(y)$ are all zeroes of $P(\bfx,t,y)$.
We will denote the Vandermonde matrix generated by $\sigma_1(y),\cdots, \sigma_\ell(y)$ by $U(y)$  and the one generated by $\beta_1,\cdots, \beta_\ell$ by $U(\beta)$.
Then $U(y)$ is a fundamental matrix of the system (\ref{EQ:associateddifferential}) and $U(\beta)$ is a fundamental matrix of a system $Y'=BY$ with $B\in \Mat_\ell(\bF(t))$.
Using the argument similar to that in the proof of Proposition~\ref{PROP:criterion}, we have that
for all $j$ with $1\leq j\leq \ell-1$,
\begin{equation}
 \label{EQ:powers1}
  y^j=\frac{1}{b(t)^jq(t)}\sum_{i=0}^{\ell-1} a_{i,j}(t)\beta^i,
 \end{equation}
where $a_{i,j}(t)\in K[t]$ and $b(t),q(t)$ are as in Proposition~\ref{PROP:form}.
 Applying $\sigma_l$ to both sides of the equalities (\ref{EQ:powers1}) implies that
 \begin{equation}
 \label{EQ:powers2}
     \sigma_l(y)^j=\frac{1}{b(t)^jq(t)}\sum_{i=0}^{\ell-1} a_{i,j}(t)\beta^i_l ,
 \end{equation}
 where $j=1,\cdots, \ell-1, l=1,\cdots,\ell$.
 Let $\tilde{a}_{i,j}=a_{i,j}b^{\ell-1-j}$ and
 \[
       G=
       \begin{pmatrix} b(t)^{\ell-1}q(t) & 0 & \cdots & 0 \\  \tilde{a}_{0,1}(t) & \tilde{a}_{1,1}(t) &  \cdots & \tilde{a}_{\ell-1,1}(t) \\ \vdots & \vdots & \vdots & \vdots \\ \tilde{a}_{0,\ell-1}(t) & \tilde{a}_{1,\ell-1}(t) &  \cdots & \tilde{a}_{\ell-1,\ell-1}(t)\end{pmatrix}
 \]
 that is an element in $\Mat_\ell(K[t])$.
 Then the equations (\ref{EQ:powers2}) can be rewritten as $U(y)=({G}U(\beta))/({b(t)^{\ell-1}q(t)})$. Hence $G$ is invertible and an easy calculation yields that
\begin{align*}
     U(\beta)'&=(b^{\ell-1}qG^{-1}U(y))'\\
     &=\left((b^{\ell-1}q)'-b^{\ell-1}qG^{-1}G'+b^{\ell-1}qG^{-1}AG\right)G^{-1}U(y)\\
     &=BU(\beta)=b^{\ell-1}qBG^{-1}U(y).
\end{align*}
This implies that
$$
    G^{-1}AG-G^{-1}G'=B-\frac{(b^{\ell-1}q)'}{b^{\ell-1}q} \in \Mat_\ell(\bF(t)).
$$

Now we prove the converse. Assume that there is an invertible matrix $G\in \Mat_\ell(K[t])$ such that
$$
    \tilde{B}=G^{-1}AG-G^{-1}G' \in \Mat_\ell(\bF(t)).
$$
Then $U(y)=G F$, where $F$ is a fundamental matrix of $Y'=\tilde{B}Y$ with entries in some differential extension field of $K(t)$.
Obviously, the entries of both $G$ and $F$ are annihilated by nonzero operators in $\bF(t)\langle D_t \rangle$
and thus so are the sum of products of entries of $G$ and $F$, in particular, so is $y$.
\end{proof}
\begin{remark}
Once $\beta$ is computed, one can obtain the linear differential equations $Y'=BY$ satisfied by $U(\beta)$.
\end{remark}

\subsection{An algorithm for testing separability}  \label{SUBSECT:algo}
We now present an algorithm to decide whether a given algebraic function $y\in \overline{\bF(t, \vx)}$
is $D_t$-separable or not. For the sake of simplicity, we may take $\bF = \bar \bQ$, the field of all algebraic numbers over $\bQ$. Let $P=\sum_{i=0}^n A_i Y^i \in \bF[t,\vx][Y] $ be the minimal polynomial of $y$.
Furthermore, assume that $A_n$ is split. Under this assumption, $y$ is $D_t$-separable if and only if  $A_n y$ is $D_t$-separable. Therefore without loss of generality, we may assume that
\begin{equation}
\label{EQ:algebraicfunction2}
P(\bfx,t,Y)=Y^n+A_{n-1}(\bfx,t)Y^{n-1}+\cdots+A_0(\bfx,t),
\end{equation}
where $A_i \in \bF[\bfx,t]$. Let $(a,\alpha)\in \bF\times \overline{\bF(\vx)}$ satisfy
\begin{equation}
\label{con2}
  P(\bfx,a,\alpha)=0, \,\,\frac{\partial P}{\partial Y}(\bfx,a,\alpha)\neq 0,
\end{equation}
and let $K=\bF(\vx,\alpha)$.
Then $P(\vx,t,Y)$ may be factorized into a product of irreducible polynomials in $K[t,Y]$. There is a unique factor of $P(\vx,t,Y)$ in $K[t,Y]$
vanishing at $(a,\alpha)$, denoted by $\bar{P}(\vx,\alpha,t,Y)$. Let $K(t,y)$ be the quotient field of $K[t,Y]/\langle \bar{P}(\vx,\alpha,t,Y)\rangle$. Furthermore suppose that $K(t,y)$ satisfies the condition 1 of Theorem~\ref{THM:descent}. Then Theorem~\ref{THM:descent} implies that there is $\beta \in \overline{\bF(t)}$ such that $K(t,y)=K(t,\beta)$. We shall show how to find such $\beta$.

Let $R=\bF(t)[\vx]$ and $S$ the integral closure of $R$ in $K(t,y)$. Then $\alpha, y\in S$.
 Suppose that
\begin{equation}
\label{EQ:minimalpoly}
     \bar{P}(\vx,\alpha,t,Y)=B_\ell Y^\ell+B_{\ell-1}Y^{\ell-1}+\cdots+B_0,
\end{equation}
where $B_\ell\in \bF[\bfx], B_i\in k[\bfx,\alpha,t]$ with $i=0,\cdots, \ell-1$. Note that
\begin{align*}
    [K(t,y):\bF(\vx,t)]&=[K(t,y):K(t)][K(t):\bF(\vx,t)]\\
    &=[K(t,y):K(t)][K:\bF(\vx)]=\ell [K:\bF(\vx)].
\end{align*}
The set
\[
  \left\{ \alpha^i y^j\left| i=0,\cdots, [K:\bF(\vx)]-1, j=0,\cdots,\ell-1\right.\right\}
\]
is a base of $K(t,y)$ over $\bF(\vx,t)$. Let $D(\vx,t)$ be the discriminant of the above base and let $F(\vx,Y)$ be an irreducible polynomial in $ \bF[\vx,Y]$ such that $F(\vx,\alpha)=0$. Then we have
\begin{lem}
\label{LM-beta}
Let $(\bfc,b) \in \bF^{m+1}$ satisfy $F(\bfc, b)=0$ and $D(\bfc,t)B_\ell(\bfc)\neq 0$. Then $\bar{P}(\bfc,b, t,Y)$ is irreducible in $\bF[t,Y]$ and for any
root $Y=\gamma$ of $\bar{P}(\bfc,b, t,Y) = 0$,
we have that $K(t,y)$ is isomorphic to $K(t,\gamma)$.
\end{lem}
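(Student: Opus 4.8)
The plan is to prove Lemma~\ref{LM-beta} by showing that specializing the parameters $\vx \mapsto \bfc$ is a ``good'' specialization that preserves both the irreducibility of $\bar P$ and the degree of the field extension, so that the resulting one-variable extension $\bF(t,\gamma)/\bF(t)$ is isomorphic to the generic extension $K(t,y)/K(t)$. First I would recall the setup: the set $\{\alpha^i y^j : 0\le i \le [K:\bF(\vx)]-1,\ 0 \le j \le \ell-1\}$ is a $\bF(\vx,t)$-basis of $K(t,y)$, and $D(\vx,t)$ is its discriminant; $F(\vx,Y)$ is the minimal polynomial of $\alpha$ over $\bF(\vx)$, so $F(\bfc,b)=0$ together with (the appropriate nonvanishing condition ensuring) $b$ is a simple root guarantees that $b$ plays the role of $\alpha$ after specialization. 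The hypothesis $D(\bfc,t)B_\ell(\bfc)\neq 0$ is exactly what is needed so that the integral-closure structure and the basis survive specialization.

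The key steps are as follows. (1) Show $\bar P(\bfc,b,t,Y)$ has degree $\ell$ in $Y$: this follows from $B_\ell(\bfc)\neq 0$ (after normalizing, $B_\ell$ only involves $\vx$). (2) Show $\bar P(\bfc,b,t,Y)$ is squarefree, hence separable, in $Y$ over $\bF(t)$: the discriminant of $\bar P$ (as a polynomial in $Y$) is, up to the specialized leading coefficient and a factor coming from $F$, a factor of $D(\vx,t)$; since $D(\bfc,t)\neq 0$ and $B_\ell(\bfc)\neq 0$, this discriminant does not vanish identically in $t$ under the specialization, so the specialized polynomial has $\ell$ distinct roots over $\overline{\bF(t)}$. (3) Show irreducibility over $\bF(t)$: here I would use a specialization/Bertini-type argument. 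Consider the ring $S$, the integral closure of $\bF(t)[\vx]$ in $K(t,y)$; by the generic-smoothness choice of $(\bfc,b)$ — i.e.\ $D(\bfc,t)\neq 0$ — the maximal ideal corresponding to $\vx = \bfc$, $\alpha = b$ is unramified in $S\otimes \overline{\bF(t)}$, so the fibre $S/\frak m S$ is an étale $\bF(t)$-algebra of the expected rank. Showing this fibre is in fact a \emph{field} (rather than a product of fields) is what gives irreducibility of $\bar P(\bfc,b,t,Y)$; one obtains this by noting that $K(t,y)$ is a regular extension in the relevant sense (absolute irreducibility, guaranteed by the simple-point hypothesis via~\cite{ragot} as used in Theorem~\ref{THM:descent}) so the generic fibre is geometrically connected, and connectedness is preserved under specialization to a point where the morphism is smooth (a standard consequence of upper semicontinuity / Stein factorization). (4) Conclude: $\bar P(\bfc,b,t,Y)$ is irreducible over $\bF(t)$ of degree $\ell$; hence $\bF(t)(\gamma)$ has degree $\ell$ over $\bF(t)$ and, since $\{\alpha^i y^j\}$ specializes to a basis of $S/\frak m S$ over $\bF(t)$ (again using $D(\bfc,t)B_\ell(\bfc)\neq 0$), the specialization map $S \to S/\frak m S$ induces a $\bF(t)$-isomorphism $K(t,y)\cong K(t,\gamma)$, identifying $\alpha\mapsto b$, $y\mapsto \gamma$.

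I expect the main obstacle to be step (3), rigorously establishing that the specialized polynomial stays irreducible rather than merely squarefree. The subtlety is that a generic specialization of an irreducible polynomial need not remain irreducible without a Bertini/Hilbert-irreducibility-style hypothesis; here the safeguard is that we specialize only the ``spectator'' variables $\vx$ (not $t$), we are working over the infinite field $\bF(t)$, and $K(t,y)/\bF(\vx,t)$ is a \emph{regular} (geometrically irreducible) extension thanks to the simple-point condition~\eqref{con2}. I would make this precise by localizing $\bF[\vx,t][y]$ away from $D(\vx,t)B_\ell(\vx)$, so that this localization is finite étale over $\bF(t)[\vx]_{D B_\ell}$ with connected (irreducible) spectrum, and then invoking that the geometric fibres of a finite étale morphism with connected total space over an irreducible base are connected exactly when the base scheme is geometrically connected — which holds here — giving that the fibre over $(\bfc,b)$ is $\operatorname{Spec}$ of a field. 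The remaining pieces — degree count, squarefreeness, transporting the basis — are routine linear algebra over $\bF(t)$ combined with Proposition~\ref{PROP:form} and the discriminant nonvanishing.
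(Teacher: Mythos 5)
There is a genuine gap in step (3), and it sits exactly where the real content of the lemma lies. The principle you invoke --- that for a finite \'etale morphism with irreducible total space over a geometrically irreducible base, the fibre over a rational point is the spectrum of a field --- is false: irreducibility of an individual fibre is a Hilbert-irreducibility-type conclusion and does not follow from \'etaleness. Concretely, take $m=1$ and $P(x,t,Y)=Y^2-(t^2-x)$. With $\alpha=\sqrt{a^2-x}$, $a\in\bF\setminus\{0\}$, one has $K=\bF(x,\alpha)$, $\bar P=P$ is irreducible over $K(t)$ with $\ell=2$, the extension is regular over $K$, and the point $(\bfc,b)=(0,a)$ satisfies $F(\bfc,b)=0$ and $D(\bfc,t)B_\ell(\bfc)\neq 0$; the morphism is finite and \'etale there, the total space is irreducible and the base geometrically irreducible --- yet $\bar P(0,a,t,Y)=Y^2-t^2$ splits over $\bF(t)$. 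So an argument along your lines would ``prove'' a false statement. What rescues the lemma is the standing hypothesis of this subsection that condition 1 of Theorem~\ref{THM:descent} holds (it fails in the example: the ramified place $t^2-x$ is not defined over $\bF$), which via Theorem~\ref{THM:descent} supplies $\beta\in\overline{\bF(t)}$ with $K(t,y)=K(t,\beta)$. Your proposal never uses $\beta$, and any proof that does not cannot succeed.

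The paper's proof is built entirely around this $\beta$. Integrality over $\bF(t)[\vx]$ and the discriminant bound give $\beta=D(\vx,t)^{-1}\sum b_{i,j}\alpha^iy^j$ with $b_{i,j}\in\bF(t)[\vx]$; the maximal ideal $\langle x_1-c_1,\ldots,x_m-c_m,\alpha-b\rangle$ is lifted by Going Up to a maximal ideal $\frakq$ of the integral closure $S$, and one reduces via $\phi:S\to S/\frakq$. Since the minimal polynomial $Q(t,Y)$ of $\beta$ has coefficients in $\bF(t)$ it is untouched by $\phi$, so $\phi(\beta)$ still has degree exactly $\ell$ over $\bF(t)$; on the other hand $D(\bfc,t)\neq 0$ shows $\phi(\beta)\in\bF(t,\gamma)$, while $B_\ell(\bfc)\neq 0$ bounds $[\bF(t,\gamma):\bF(t)]$ by $\ell$. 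The sandwich $\ell\geq[\bF(t,\gamma):\bF(t)]\geq[\bF(t,\phi(\beta)):\bF(t)]=\ell$ then yields simultaneously the irreducibility of $\bar P(\bfc,b,t,Y)$ and $K(t,y)=K(t,\beta)\cong K(t,\gamma)$. Your steps (1), (2) and the bookkeeping in (4) are essentially fine, but they cannot substitute for this descent argument; you should replace step (3) by the explicit use of $\beta$.
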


\begin{proof} Let $\beta \in K(t,y)$ be as above.  Since $\beta $ is algebraic over $\bF(t)$ we have that $\beta$ is integral over $R=\bF(t)[\bfx]$. Therefore we may write
 \[\beta = \frac{1}{D(\vx,t)}\sum b_{i,j}\alpha^i y^j,\]
 where the $b_{i,j} \in R$.  Let $(\bfc,b)$ satisfy the hypothesis of the lemma and consider the ideal
 \[\frakp = \langle x_1-c_1, \ldots ,x_m - c_m,\alpha-b\rangle \lhd R[\alpha].\]
 Note that $\frakp$ is a maximal ideal. The Going Up Theorem implies that there is a maximal ideal $\frakq \lhd S$ such that $\frakq\cap R[\alpha] = \frakp$.  In particular, $D(\vx,t) \notin \frakq$. There is a natural map $\phi : S \rightarrow S/\frakq$. We will let $M$ denote the field $S/\frakq$.  The element $\gamma = \phi(y)$ is a root of $\bar{P}(\bfc,b,t,\gamma) = 0$.  Since the minimal polynomial $Q(t,Y)$ of $\beta$ lies in $\bF[t,Y]$, it remains unchanged when we apply $\phi$ to its coefficients.  Therefore $\phi(\beta)$ satisfies $Q(t,\phi(\beta)) = 0$.  In particular, the degree of $\phi(\beta)$ over $\bF(t)$ is equal to $\ell$, the degree of $K(t,\beta)$ over $K(t)$. Since
 \[ \phi(\beta) = \frac{1}{D(\bfc,t)}\sum \phi(b_{i,j})\phi(\alpha)^i \gamma^j\]
 we have that $\phi(\beta) \in \bF(t)(\gamma)$. Note that $\bar{P}(\bfc,b,t,Y)\neq 0$. The element $\gamma$ satisfies $\bar{P}(\bfc,b,t,\gamma) = 0$
 and so it has degree at most $\ell$ over $\bF(t)$. Since $\phi(\beta) \in \bF(t,\gamma)$, we have that
\begin{align*}
  \ell \geq [\bF(t,\gamma):\bF(t)] \geq [\bF(t,\phi(\beta)):\bF(t)]&=[K(t,\beta):K(t)]\\
  &=[K(t,y):K(t)]=\ell
\end{align*}
and so $[\bF(t,\gamma):\bF(t)]  = \ell$ .  Therefore $\bar{P}(\bfc,b,t,Y)$ is irreducible. Furthermore $\bF(t,\beta) $ is isomorphic to $\bF(t, \phi(\beta)) = \bF(t,\gamma)$. This implies that $K(t,y) $  is isomorphic to $K(t,\gamma)$.
\end{proof}
Let $\bar{P}(\bfx,\alpha,t,Y)$ be as above. Lemma~\ref{LM-beta} implies that if $y$ is $D_t$-separable then one can compute $(\bfc,b)\in \bF^{m+1}$ such that $\bar{P}(\bfc,b,t,Y)$ is irreducible over $\bF(t)$ and $\beta$ can be taken to be a zero of $\bar{P}(\bfc,b,t,Y)$. From $\bar{P}(\bfc,b,t,Y)$, we can construct the associated differential equation of $\beta$ over $\bF(t)$. Denote this associated differential equation by $Y'=BY$ with $B\in \Mat_\ell(\bF(t))$. The proof of Proposition~\ref{PROP:criterion} implies that if $y$ is $D_t$-separable then there is an invertible matrix $G$ with entries in $K[t]$ such that
\[
   G'=AG-G\left(B-\frac{q'(t)}{q(t)}\right),
\]
where $q(t)$ is the discriminant of $\{1,\beta,\cdots,\beta^{\ell-1}\}$ and $Y'=AY$ is the associated differential equation of $y$ over $K(t)$. { Here the polynomial $b(t)$ in (\ref{EQ:powers1}) disappears because we assume that $P$ is monic in $Y$.}
Note that $G$ is a polynomial solution of the linear differential equation $Y'=AY-Y(B-q(t)'/q(t))$, which can be computed by algorithms developed in~\cite{abramov-bronstein, barkatou}.

We summarize the above results as the following algorithm.
\begin{algorithm}
Input: An irreducible polynomial
$$
   P(t,\vx,Y)=A_n Y^n+A_{n-1} Y^{n-1}+\cdots+A_0 \in \bF[t,\vx, Y].
$$
Output: ``Yes" if $y$ is $D_t$-separable, otherwise ``No", where $y\in \overline{k(\bfx,t)}$ is a root of $P(Y)=0$.
\begin{itemize}
   \item [$(1)$] If $A_n $ is not split, then $y$ is not $D_t$-separable and return ``No".
   \item [$(2)$] Transform $P(\bfx,t,Y)$ into a monic polynomial by replacing $Y$ by $Y/A_n$  and clear the denominators.
   \item [$(3)$] Compute $\beta$:
      \begin{itemize}
         \item [$(3.a)$] Find $(a,\alpha)\in \bF\times \overline{\bF(\vx)}$ satisfying the conditions (\ref{con2}).
           \item [$(3.b)$] Decompose $P $ into a product of irreducible polynomials over $\bF(\bfx,\alpha)$. Let $\bar{P}(\bfx,\alpha, t,Y)$ be the irreducible factor satisfying that
                 \[\bar{P}(\bfx,\alpha, a,\alpha)=0.\]
              \item [$(3.c)$] Compute $D(\bfx,t)$, the discriminant of the base $\{\alpha^i \bar{y}^j\}$, where $\bar{y}$ is a zero of $\bar{P}(\bfx,\alpha, t,Y)$ in $\overline{\bF(t, \vx)}$.
              \item [$(3.d)$] Compute a point $(\bfc,b)\in \bF^{m+1}$ such that
              \[\text{$D(\bfc,t)B_\ell(\bfc)\neq 0$ and $F(\bfc,b)=0$,} \]where $F$ is the minimal polynomial of $\alpha$ over $\bF(\bfx)$ and $B_\ell(\bfx)$ is the leading coefficient of $\bar{P}(\bfx,\alpha,t,Y)$.
              \item [$(3.e)$] Let $\beta$ be a zero of $\bar{P}(\bfc,b,t,Y)=0$ in $\overline{\bF(t)}$.
      \end{itemize}
   \item [$(4)$] Compute $G$:
      \begin{itemize}
          \item [$(4.a)$]
             Compute $q(t)$, the discriminant of the base $\{\beta^j|j=0,\cdots,\ell-1\}$ and compute the associated differential equations of $y$ and $\beta$, which are denoted by $Y'=AY$ and $Y'=BY$ respectively.
          \item [$(4.b)$] By algorithms developed in \cite{abramov-bronstein, barkatou}, compute a base of polynomial solutions of { $Z'=AZ-Z(B-q(t)'/q(t))$, where $Z=(z_{ij})$ with indeterminate entries,} say $\{Q_1,\cdots,Q_s\}$.
          \item [$(4.c)$] Compute $C=\det(z_1Q_1+\cdots+z_sQ_s)$ with $z_1,\cdots,z_s$ being indeterminates. If $C=0$ then return ``No", otherwise return ``Yes".
       \end{itemize}
\end{itemize}
\end{algorithm}

We now show an example to illustrate the main steps of the above algorithm.

\begin{exam}
\label{exam:algebraiccase}
 Let $\bE=\bar{\QX}(t,x)$ and $y$ be the algebraic function over $\bE$ defined by
\[
  P(x,t,Y):=Y^2-2(xt+1)Y+(xt+1)^2-t.
\]
We are going to decide whether $y$ is $D_t$-separable or not. We will follow the above algorithm step by step. Since $P(x,t,Y)$ is monic in $Y$. We begin with the third step, i.e., computing $\beta$.
\begin{enumerate}
\item [(3)] Compute $\beta=\sqrt{t}+1$:
\begin{enumerate}
  \item [(3.a)] Set $(a,\alpha)=(1,x)$. One sees that
  $
    P(x,1,x)=0
  $
   and $$\frac{\partial P}{\partial Y}(x,1,x)=-2\neq 0.$$ So $\bar{\bQ}(x,\alpha)=\bar{\bQ}(x)$.
   \item [(3.b)] Since $P(x,t,Y)$ is irreducible over $\bar{\bQ}(x)$, we take $\bar{P}(x,\alpha,t,Y)$ to be $P(x,t,Y).$
   \item [(3.c)] Set $D(x,t)=4t$, which is the discriminant of the base $\{1,\bar{y}\}$ with $P(x,t,\bar{y})=0$.
   \item [(3.d)] One sees that $B_2(x)=1$ and $F=z-x$. So the point $(0,0)$ satisfies $D(0,t)B_2(0)\neq 0$ and $F(0,0)=0$.
   \item [(3.e)] Set $\beta=\sqrt{t}+1$ which is a zero of $P(0,t,Y)=Y^2-2Y+1-t$.
\end{enumerate}
\item [(4)] Compute $G$:
\begin{enumerate}
  \item [(4.a)] Set $q(t)=4t$, which is the discriminant of the base $\{1,\beta\}$, and set
  \[
     A=\begin{pmatrix}
       0 & 0 \\
       \frac{x}{2}-\frac{1}{2t} & \frac{1}{2t}
     \end{pmatrix},\,\,
     B=\begin{pmatrix}
          0 & 0 \\
          -\frac{1}{2t} & \frac{1}{2t}
     \end{pmatrix}.
  \]
   Then $Y'=AY$ and $Y'=BY$ are the associated differential equations of $y$ and $\beta$ respectively.
   \item [(4.b)] Set $Z=(z_{ij})_{1\leq i,j\leq 2}$, and compute a base of the polynomial solutions of the system
   {$Z'=AZ-Z(B-1/t)$.} One has that
   \[
      \left\{Q_1:=\begin{pmatrix}
               t & 0 \\
               xt^2+t & 0
             \end{pmatrix},\,\,Q_2:=\begin{pmatrix}
                                 0 & 0 \\
                                 -t & t
                               \end{pmatrix}\right\}
   \]
   is a required base.
   \item [(4.c)] One has that ${\rm det}(z_1Q_1+z_2Q_2)=z_1z_2t^2\neq 0$. So $y$ is $D_t$-separable.
\end{enumerate}
\end{enumerate}
\end{exam}


\section{Conclusion and future work}\label{SECT:conc}
We present a connection between the separability problems and the existence problems in creative telescoping.
Separability criteria are given for rational functions, hyperexponential functions, hypergeometric terms and
algebraic functions. Some results in the algebraic case have been generalized to the case of $D$-finite functions
whose annihilating operators of minimal order are completely reducible in~\cite{chen2021}. The existence problems of telescopers for
rational functions in three variables are now completely settled by combining the results in~\cite{chen2020JSC} with the
separability criteria in this paper.

In terms of future research, the first natural direction is to solve
the separability problem for P-recursive sequences, which may have
applications in solving the general termination problem of Zeilberger's algorithms beyond the
hypergeometric case. Another direction is to develop more symbolic computational tools for
the method of separation of variables for partial differential equations  as in~\cite{Miller1977}.



\bibliographystyle{plain}

%
%

%

\end{document}